\newcommand{\captionfonts}{\normalsize}
\long\def\@makecaption#1#2{%
  \vskip\abovecaptionskip
  \sbox\@tempboxa{{\captionfonts #1: #2}}%
  \ifdim \wd\@tempboxa >\hsize
    {\captionfonts #1: #2\par}
  \else
    \hbox to\hsize{\hfil\box\@tempboxa\hfil}%
  \fi
  \vskip\belowcaptionskip}
\newtheorem*{rep@theorem}{\rep@title}
\newcommand{\newreptheorem}[2]{%
\newenvironment{rep#1}[1]{%
 \def\rep@title{#2 \ref{##1}}%
 \begin{rep@theorem}}%
 {\end{rep@theorem}}}
\newtheorem{Th}{Theorem} 
\newtheorem{lem}{Lemma}
\newtheorem{cor}{Corollary} 
\theoremstyle{definition}
\newtheorem{defn}{Definition}
\theoremstyle{remark}
\newtheorem*{rmk}{Remark}
\begin{document}

\begin{center}
{\Large \bf A Hebbian/Anti-Hebbian Neural Network for Linear Subspace Learning: A Derivation from Multidimensional Scaling of Streaming Data}

{Cengiz Pehlevan$^{\displaystyle 1,2}$, Tao Hu$^{\displaystyle 3}$, Dmitri B. Chklovskii$^{\displaystyle 2}$}\\
{$^{\displaystyle 1}$Janelia Research Campus, Howard Hughes Medical Institute, Ashburn, VA 20147}\\
{$^{\displaystyle 2}$Simons Center for Data Analysis, 160 Fifth Ave, New York, NY 10010 }\\
{$^{\displaystyle 3}$Texas A\&M University, MS 3128 TAMUS, College Station, TX 77843}\\
\end{center}

\vspace{0mm}
{\bf Keywords:} Online learning, Linear subspace tracking, Multidimensional scaling, Neural networks

\thispagestyle{empty}
\markboth{}{NC instructions}
%
%
\begin{center} {\bf Abstract} \end{center}

Neural network models of early sensory processing typically reduce the dimensionality of streaming input data. Such networks learn the principal subspace, in the sense of principal component analysis (PCA), by adjusting synaptic weights according to activity-dependent learning rules. When derived from a principled cost function these rules are nonlocal and hence biologically implausible. At the same time, biologically plausible local rules have been postulated rather than derived from a principled cost function. Here, to bridge this gap, we derive a biologically plausible network for subspace learning on streaming data by minimizing a principled cost function. In a departure from previous work, where cost was quantified by the representation, or reconstruction, error, we adopt a  multidimensional scaling  (MDS) cost function for streaming data. The resulting algorithm relies only on biologically plausible Hebbian and anti-Hebbian local learning rules. In a stochastic setting, synaptic weights converge to a stationary state which projects the input data onto the principal subspace. If the data are generated by a nonstationary distribution, the network can track the principal subspace. Thus, our result makes a step towards an algorithmic theory of neural computation.

\section{Introduction}

Early sensory processing reduces the dimensionality of streamed inputs \citep{hyvarinen2009} as evidenced by a high ratio of input to output nerve fiber counts \citep{shepherd2003}. For example, in the human retina, information gathered by $\approx$125 million photoreceptors is conveyed to the Lateral Geniculate Nucleus through $\approx$1 million ganglion cells \citep{hubel1995}. By learning a lower-dimensional subspace and projecting the streamed data onto that subspace the nervous system de-noises and compresses the data simplifying further processing. Therefore, a biologically plausible implementation of dimensionality reduction may offer a model of early sensory processing. 

For a single neuron, a biologically plausible implementation of dimensionality reduction in the streaming, or online, setting has been proposed in the seminal work of \citep{oja1982simplified}, Figure \ref{Fig1}A. At each time point, $t$, an input vector, ${\bf x}_t$, is presented to the neuron, and, in response, it computes a scalar output, $y_t={\bf w} {\bf x}_t$, were {\bf w} is a row-vector of input synaptic weights. Furthermore, synaptic weights {\bf w} are updated according to a version of Hebbian learning called Oja's rule:
\begin{align}\label{singleOja}
{\bf w} \leftarrow {\bf w} + \eta y_t ({\bf x}_t^\top-{\bf w} y_t),
\end{align}
where $\eta$ is a learning rate and $^\top$ designates a transpose. Then, the neuron's synaptic weight vector converges to the principal eigenvector of the covariance matrix of the streamed data \citep{oja1982simplified}. Importantly, Oja's learning rule is local meaning that synaptic weight updates depend on the activities of only pre- and postsynaptic neurons accessible to each synapse and, therefore, biologically plausible. 

Oja's rule can be derived by an approximate gradient descent of the mean squared representation error \citep{cichocki2002adaptive,yang1995projection}, a so-called synthesis view of principal component analysis (PCA) \citep{pearson1901,preisendorfer1988principal}:
\begin{align}\label{singlePCA}
\min_{{\bf w}}\sum_t  \left\Vert {\bf x}_t-{\bf w^\top}{\bf w}{\bf x}_t\right\Vert_2^2.
\end{align}

Computing principal components beyond the first requires more than one output neuron and motivated numerous neural networks. Some well-known examples are the Generalized Hebbian Algorithm (GHA) \citep{sanger1989optimal}, F\"oldiak's network \citep{foldiak1989adaptive}, the subspace network \citep{karhunen1982new}, Rubner's network \citep{rubner1989self,rubner1990development}, Leen's minimal coupling and full coupling networks \citep{leen1990,leen1991} and the APEX network \citep{kung1990neural,kung1994adaptive}. We refer to \citep{BeckerPlumbley96,diamantaras1996principal,diamantaras2002neural} for a detailed review of these and further developments. 

However, none of the previous contributions was able to derive a multineuronal single-layer network with local learning rules by minimizing a principled cost function, in a way that Oja's rule \eqref{singleOja} was derived for a single neuron. The GHA and the subspace rules rely on nonlocal learning rules: feedforward synaptic updates depend on other neurons' synaptic weights and activities. Leen's minimal network is also nonlocal:  feedforward synaptic updates of a neuron depend on its lateral synaptic weights. While F\"oldiak's, Rubner's and Leen's full coupling networks use local Hebbian and anti-Hebbian rules, they were postulated rather than derived from a principled cost function. APEX network, perhaps, comes closest to our criterion: the rule for each neuron can be related separately to a cost function which includes contributions from other neurons. But no cost function describes all the neurons combined.

At the same time, numerous dimensionality reduction algorithms have been developed for data analysis needs disregarding the biological plausibility requirement. Perhaps the most common approach is again PCA, which was originally developed for batch processing \citep{pearson1901} but later adapted to streaming data \citep{yang1995projection,crammer2006online,arora2012stochastic,goes2014robust}. For a more detailed collection of references, see e.g. \citep{balzano2012}. These algorithms typically minimize the representation error cost function:
\begin{align}\label{PCA}
\min_{\bf F} \left\Vert {\bf X}-{\bf F^\top}{\bf F}{\bf X}\right\Vert_F^2,
\end{align}
where ${\bf X}$ is a data matrix and ${\bf F}$ is a wide matrix (for detailed notation, see below). The minimum of \eqref{PCA} is when rows of ${\bf F}$ are orthonormal and span the $m$-dimensional principal subspace, and therefore ${\bf F}^\top{\bf F}$ is the projection matrix to the subspace \citep{yang1995projection}\footnote{Recall that, in general, the projection matrix to the row space of a matrix {\bf P} is given by $ {\bf P}^\top\left({\bf P}{\bf P}^\top\right)^{-1}{\bf P}$,  provided ${\bf P}{\bf P}^\top$ is full rank \citep{plumbley1995}. If the rows of $ {\bf P}$ are orthonormal this reduces to ${\bf P}^\top{\bf P}$.}.

 A gradient descent minimization of such cost function can be approximately implemented by the subspace network \citep{yang1995projection}, which, as pointed out above, requires nonlocal learning rules. While this algorithm can be implemented in a neural network using local learning rules, it requires a second layer of neurons  \citep{oja1992principal}, making it less appealing.

In this paper, we derive a single-layer network with local Hebbian and anti-Hebbian learning rules, similar in architecture to F\"oldiak's \citep{foldiak1989adaptive} (see Figure \ref{Fig1}B),  from a principled cost function and demonstrate that it recovers a principal subspace from streaming data. The novelty of our approach is that, rather than starting with the representation error cost function traditionally used for dimensionality reduction, such as PCA, we use the cost function of classical multidimensional scaling (CMDS), a member of the family of multidimensional scaling (MDS) methods \citep{cox2000multidimensional,mardia1980multivariate}. Whereas the connection between CMDS and PCA has been pointed out previously \citep{williams01ona,cox2000multidimensional,mardia1980multivariate}, CMDS is typically performed in the batch setting. Instead, we developed a neural network implementation of CMDS for streaming data.  

The rest of the paper is organized as follows. In Section 2, by minimizing the CMDS cost function we derive two online algorithms implementable by a single-layer network, with synchronous and asynchronous synaptic weight updates. In Section 3, we demonstrate analytically that synaptic weights define a principal subspace whose dimension $m$ is determined by the number of output neurons and that the stability of the solution requires that this subspace corresponds to top $m$ principal components. In Section 4, we show numerically that our algorithm recovers the principal subspace of a synthetic dataset, and does it faster than the existing algorithms. Finally, in Section 5, we consider the case when data are generated by a nonstationary distribution and present a generalization of our algorithm to  principal subspace tracking.

\begin{figure}
\centering
\includegraphics{./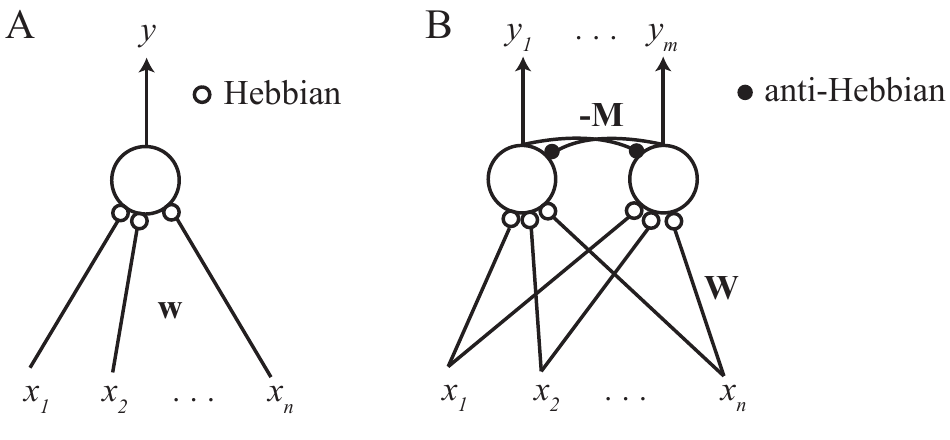}
  \caption{ An Oja  neuron and our neural network. {\bf A}. A single Oja neuron computes the principal component, $y$, of the input data, ${\bf x}$, if its synaptic weights follow Hebbian updates. {\bf B}. A multineuron network computes the principal subspace of the input if the feedforward connection weight updates follow a Hebbian and the lateral connection weight updates follow an anti-Hebbian rule.\label{Fig1}}
\end{figure}

\section{Derivation of online algorithms from the CMDS cost function}

CMDS represents high-dimensional input data in a lower-dimensional output space while preserving pairwise similarities between samples\footnote{Whereas MDS in general starts with dissimilarities between samples that may not live in Euclidean geometry, in CMDS data are assumed to have a Euclidean representation.} \citep{young1938discussion,torgerson1952multidimensional}.
Let $T$ centered input data samples in $\mathbb{R}^n$ be represented by column-vectors ${\bf x}_{t=1,\ldots,T}$ concatenated into an $n\times T$ matrix ${\bf X}=[{\bf x}_1,\ldots,{\bf x}_T]$. The corresponding output representations in $\mathbb{R}^m$, $m\leq n$, are column-vectors, ${\bf y}_{t=1,\ldots,T}$, concatenated into an $m\times T$ dimensional matrix ${\bf Y}=[{\bf y}_1,\ldots,{\bf y}_T]$. Similarities between vectors in Euclidean spaces are captured by their inner products. For the input (output) data, such inner products are assembled into a ${T\times T}$ Gram matrix\footnote{\label{dissimilarity}When input data are pairwise Euclidean distances, assembled into a matrix ${\bf Q}$, the Gram matrix, ${\bf X}^\top{\bf X}$, can be constructed from ${\bf Q}$ by ${\bf H} {\bf Z} {\bf H}$, where $Z_{ij} = -1/2Q^2_{ij}$, ${\bf H}={\bf I}_n-1/n{\bf1}{\bf1}^\top$ is the centering matrix, and ${\bf I}_n$ is the $n$ dimensional identity matrix \citep{cox2000multidimensional,mardia1980multivariate}.} ${\bf X}^\top{\bf X}$ (${\bf Y}^\top{\bf Y}$). For a given ${\bf X}$, CMDS finds ${\bf Y}$ by minimizing the so-called ``strain" cost function \citep{carroll1972idioscal} :
\begin{align}\label{strain}
 \min_{{\bf Y}} \left\Vert {\bf X}^\top{\bf X}-{\bf Y}^\top{\bf Y}\right\Vert_F^2.
\end{align}

For discovering a low-dimensional subspace, the CMDS cost function \eqref{strain} is a viable alternative to the representation error cost function \eqref{PCA} because its solution is related to PCA \citep{williams01ona,cox2000multidimensional,mardia1980multivariate}. Specifically, {\bf Y} is the linear projection of {\bf X} onto the (principal sub-)space  spanned by $m$ principal eigenvectors of the sample covariance matrix ${\bf C}_T= \frac 1T \sum_{t=1}^T {\bf x}_t {\bf x}_t^\top={\bf X}{\bf X}^\top$. The CMDS cost function defines a subspace rather than individual eigenvectors because left orthogonal rotations of an optimal {\bf Y} stay in the subspace and are also optimal, as is evident from the symmetry of the cost function. 

In order to reduce the dimensionality of streaming data, we minimize the CMDS cost function \eqref{strain} in the stochastic online setting.  At time $T$, a data sample, ${\bf x}_T$, drawn independently from a zero-mean distribution is presented to the algorithm which computes a corresponding output, ${\bf y}_T$ prior to the presentation of the next data sample. Whereas in the batch setting, each data sample affects all outputs, in the online setting, past outputs cannot be altered. Thus, at time $T$ the algorithm minimizes the cost depending on all inputs and ouputs up to time $T$ with respect to ${\bf y}_T$ while keeping all the previous outputs fixed:
\begin{align}\label{Frob}
{{\bf y}_T} &=  \mathop {\arg \min }\limits_{{\bf y}_T} \left\Vert {\bf X}^\top{\bf X}-{\bf Y}^\top{\bf Y}\right\Vert_F^2= \mathop {\arg \min }\limits_{{\bf y}_T} \sum_{t=1}^T\sum_{t'=1}^{T}\left({\bf x}_t^\top{\bf x}_{t'}-{\bf y}_t^\top{\bf y}_{t'}\right)^2, 
\end{align}
where the last equality follows from the definition of the Frobenius norm. By keeping only the terms that depend on current output ${\bf y}_T$ we get:
\begin{align}\label{onlineStrain_old} 
{{\bf y}_T}  &=\mathop {\arg \min }\limits_{{\bf y}_T} \left[ { - 4{{\bf x}^\top_T}\left( {\sum\limits_{t = 1}^{T - 1} {{\bf x}_t{{\bf y}^\top_t}} } \right){\bf y}_T + 2{{\bf y}^\top_T}\left( {\sum\limits_{t = 1}^{T - 1} {{\bf y}_t{{\bf y}^\top_t}} } \right){\bf y}_T  - 2{{\left\| {{\bf x}_T} \right\|}^2}{{\left\| {{\bf y}_T} \right\|}^2} + {{\left\| {{\bf y}_T} \right\|}^4}} \right].
\end{align}

In the large-$T$ limit, expression \eqref{onlineStrain_old} simplifies further because the first two terms grow linearly with $T$, and therefore dominate over the last two. After dropping the last two terms we arrive at: 
 \begin{align}\label{onlineStrain}
{{\bf y}_T} =\mathop {\arg \min }\limits_{{\bf y}_T} \left[ { - 4{{\bf x}^\top_T}\left( {\sum\limits_{t = 1}^{T - 1} {{\bf x}_t{{\bf y}^\top_t}} } \right){\bf y}_T + 2{{\bf y}^\top_T}\left( {\sum\limits_{t = 1}^{T - 1} {{\bf y}_t{{\bf y}^\top_t}} } \right){\bf y}_T  } \right].
\end{align}

We term the cost in expression \eqref{onlineStrain} the ``online CMDS cost". Because the online CMDS cost is a positive semi-definite quadratic form in ${\bf y}_T$ for sufficiently large $T$, this optimization problem is convex.  While it admits a closed-form analytical solution via matrix inversion, we are interested in biologically plausible algorithms. Next, we consider two algorithms that can be mapped onto single-layer neural networks with local learning rules: coordinate descent leading to asynchronous updates and Jacobi iteration leading to synchronous updates. 

\subsection{A neural network with asynchronous updates}

The online CMDS cost function \eqref{onlineStrain} can be minimized by  coordinate descent which at every step finds the optimal value of one component of ${\bf y}_T$ while keeping the rest fixed. The components can be cycled through in any order until the iteration converges to a fixed point. Such iteration is guaranteed to converge under very mild assumptions:  diagonals of $\sum\limits_{t = 1}^{T - 1} {{\bf y}_t{{\bf y}^\top_t}}$ has to be positive \citep{luo1991on}, meaning that each output coordinate has produced at least one non-zero output before current time step $T$. This condition is almost always satisfied in practice. 

The cost to be minimized at each coordinate descent step with respect to $i^{\rm th}$ channel's activity is:
 \begin{align*}
{y_{T,i}} =\mathop {\arg \min }\limits_{{y_{T,i}} } \left[ { - 4{{\bf x}^\top_T}\left( {\sum\limits_{t = 1}^{T - 1} {{\bf x}_t{{\bf y}^\top_t}} } \right){\bf y}_T + 2{{\bf y}^\top_T}\left( {\sum\limits_{t = 1}^{T - 1} {{\bf y}_t{{\bf y}^\top_t}} } \right){\bf y}_T  } \right].
\end{align*}
Keeping only those terms that depend on ${y_{T,i}}$ yields:
 \begin{align*}
{y_{T,i}} &=\mathop {\arg \min }\limits_{{y_{T,i}} } \left[ - 4\sum_k{x_{T,k}} \left({\sum\limits_{t = 1}^{T - 1} {x_{t,k}y_{t,i}} }\right) y_{T,i} \nonumber \right. \\
&\qquad\qquad\qquad\qquad\left.+ 4\sum_{j\neq i}{y_{T,j}}\left( {\sum\limits_{t = 1}^{T - 1} {y_{t,j}{y_{t,i}}} } \right)y_{T,i} + 2\left( {\sum\limits_{t = 1}^{T - 1} y_{t,i}^2}  \right)y_{T,i}^2   \right].
\end{align*}
By taking a derivative with respect to  $y_{T,i}$ and setting it to zero we arrive at the following closed-form solution:
\begin{align}
\label{solution1}
{y_{T,i}} =  \frac{{\sum\limits_k {\left( {\sum\limits_{t = 1}^{T - 1} {y_{t,i}^{}x_{t,k}^{}} } \right)x_{T,k}^{}} }}{{\sum\limits_{t = 1}^{T - 1} {y_{t,i}^2} }} - \frac{{\sum\limits_{j \ne i} {\left( {\sum\limits_{t = 1}^{T - 1} {y_{t,i}^{}y_{t,j}^{}} } \right)y_{T,j}^{}} }}{{\sum\limits_{t = 1}^{T - 1} {y_{t,i}^2} }}.
\end{align}

To implement this algorithm in a neural network we denote normalized input-output and output-output covariances, 
\begin{align}\label{WM}
W_{T,ik}^{} = \frac{{\sum\limits_{t = 1}^{T - 1} {y_{t,i}^{}x_{t,k}^{}} }}{{\sum\limits_{t = 1}^{T - 1} {y_{t,i}^2} }},\qquad  M_{T,i,j \ne i}^{} = \frac{{\sum\limits_{t = 1}^{T - 1} {y_{t,i}^{}y_{t,j}^{}} }}{{\sum\limits_{t = 1}^{T - 1} {y_{t,i}^2} }},\qquad M_{T,ii}^{} = 0,
\end{align}
allowing us to rewrite the solution \eqref{solution1} in a form suggestive of a linear neural network: 
\begin{align}\label{dynamics}
{y_{T,i}} \leftarrow \sum_{j=1}^n W_{T,ij} x_{T,j} - \sum_{j=1}^m M_{T,ij} y_{T,j}, 
\end{align}
where ${\bf W}_T$ and ${\bf M}_T$  represent the synaptic weights of feedforward and lateral connections respectively, Figure 1B.

Finally, to formulate a fully online algorithm we rewrite \eqref{WM} in a recursive form. This requires introducing a scalar variable ${D}_{T,i}$ representing cumulative activity of a neuron $i$ up to time $T-1$, 
\begin{align}\label{Y}
{D}_{T,i} =\sum\limits_{t = 1}^{T - 1} {y_{t,i}^2},
\end{align}
Then, at each time point, $T$, after the output ${\bf y}_T$ is computed by the network, the following updates are performed:
\begin{align}\label{hah}
D_{T+1,i} &\leftarrow D_{T,i}+ y_{T,i}^2\nonumber\\
{W_{T+1,ij}} &\leftarrow {W_{T ,ij}} + y_{T,i}\left( {x_{T,j} - {W_{T,ij}}y_{T,i}} \right)/D_{T+1,i} \nonumber\\
{M_{T+1,i,j \ne i}} &\leftarrow {M_{T,ij}} + y_{T,i}\left( {y_{T,j}^{} - {M_{T,ij}}y_{T,i}^{}} \right)/D_{T+1,i}.
\end{align}

Equations \eqref{dynamics} and \eqref{hah} define a neural network algorithm that minimizes the online CMDS cost function \eqref{onlineStrain} for streaming data by alternating between two phases: neural activity dynamics and synaptic updates. After a data sample is presented at time $T$, in the neuronal activity phase, neuron activities are updated one-by-one, i.e. asynchronously, \eqref{dynamics} until the dynamics converges to a fixed point defined by the following equation:
\begin{align}\label{fixed}
{\bf y}_T = {\bf W}_T{\bf x}_T-{\bf M}_T{\bf y}_T \qquad \implies \qquad {\bf y}_T = ({\bf I}_m+{\bf M}_T)^{-1}{\bf W}_T{\bf x}_T,
\end{align}
where ${\bf I}_m$ is the $m$-dimensional identity matrix.

In the second phase of the algorithm, synaptic weights are updated, according to a local Hebbian rule \eqref{hah} for feedforward connections, and according to a local anti-Hebbian rule (due to the $(-)$ sign in equation \eqref{dynamics}) for lateral connections. Interestingly, these updates have the same form as the single-neuron Oja's rule \eqref{singleOja} \citep{oja1982simplified}, except that the learning rate is not a free parameter but is determined by the cumulative neuronal activity $1/D_{T+1,i}$\footnote{\label{StepSize}The single neuron Oja's rule derived from the minimization of a least squares optimization cost function ends up with the identical learning rate \citep{diamantaras2002neural,hu2013neuron}. Motivated by this fact, such learning rate has been argued to be optimal for the APEX network \citep{diamantaras1996principal,diamantaras2002neural} and used by others \citep{yang1995projection}.}. To the best of our knowledge such single-neuron rule \citep{hu2013neuron} has not been derived in the multineuron case. An alternative derivation of this algorithm is presented in Appendix \ref{altAsynch}

Unlike the representation error cost function \eqref{PCA}, the CMDS cost function \eqref{strain} is formulated only in terms of input and output activity. Yet, the minimization with respect to ${\bf Y}$ recovers feedforward and lateral synaptic weights. 

\subsection{A neural network with synchronous updates}

Here, we present an alternative way to derive a neural network algorithm from the large-$T$ limit of the online CMDS cost function \eqref{onlineStrain} . By taking a derivative with respect to  ${\bf y}_T$ and setting it to zero we arrive at the following linear matrix equation:
\begin{align}\label{anMin}
\left( {\sum\limits_{t = 1}^{T - 1} {{\bf y}_t{{\bf y}^\top_t}} } \right){\bf y}_T  = \left( {\sum\limits_{t = 1}^{T - 1} {{\bf y}_t{{\bf x}^\top_t}} } \right){{\bf x}_T},
 \end{align} 
We solve this system of equations using Jacobi iteration \citep{strang2009}, by first splitting the output covariance matrix that appears on the left side of \eqref{anMin} into its diagonal component ${\bf D}_T$ and the remainder ${\bf R}_T$:
\begin{align*}
\left( {\sum\limits_{t = 1}^{T - 1} {{\bf y}_t{{\bf y}^\top_t}} } \right) = {\bf D}_T + {\bf R}_T,
\end{align*}
where the $i$ th diagonal element of ${\bf D}_T$, $D_{T,i} = \sum_{t=1}^{T-1}y_{t,i}^2$, as defined in \eqref{Y}. Then, \eqref{anMin} is equivalent to:
\begin{align*}
{\bf y}_T  ={\bf D}_T^{-1} \left( {\sum\limits_{t = 1}^{T - 1} {{\bf y}_t{{\bf x}^\top_t}} } \right){{\bf x}_T} -{\bf D}_T^{-1}{\bf R}_T {\bf y}_T  .
\end{align*}

Interestingly, the matrices obtained on the right side are algebraically equivalent to the feedforward and lateral synaptic weight matrices defined in \eqref{WM}: 
\begin{align}
{\bf W}_T={\bf D}_T^{-1} \left( {\sum\limits_{t = 1}^{T - 1} {{\bf y}_t{{\bf x}^\top_t}} } \right) \qquad {\rm and } \qquad {\bf M}_T={\bf D}_T^{-1}{\bf R}_T.
\end{align} 
Hence, the Jacobi iteration for solving  \eqref{anMin}
\begin{align}\label{Jit}
{\bf y}_T \leftarrow {\bf W}_T{\bf x}_T-{\bf M}_T{\bf y}_T.
\end{align}
converges to the same fixed point as the coordinate descent, \eqref{fixed}.

Iteration \eqref{Jit} is naturally implemented by the same single-layer linear neural network as for the asynchronous update, Figure 1B. For each stimulus presentation the network goes through two  phases. In the first phase, iteration \eqref{Jit} is repeated until convergence. Unlike the coordinate descent algorithm which updated activity of neurons one after another, here, activities of all neurons are updated synchronously. In the second phase, synaptic weight matrices are updated according to the same rules as in the asynchronous update algorithm \eqref{hah}.

Unlike the asynchronous update \eqref{dynamics}, for which convergence is almost always guaranteed \citep{luo1991on}, convergence of iteration \eqref{Jit} is guaranteed only when the  spectral radius of ${\bf M}$ is less than 1 \citep{strang2009}. Whereas we cannot prove that this condition is always met, in practice, the synchronous algorithm works well. While in the rest of the paper, we consider only the asynchronous updates algorithm, our results hold for the synchronous updates algorithm provided it converges.

\section{Stationary synaptic weights define a principal subspace}\label{secAnalytical}

What is the nature of the lower dimensional representation found by our algorithm? In CMDS, outputs $y_{T,i}$ are the Euclidean coordinates in the principal subspace of the input vector ${\bf x}_T$ \citep{cox2000multidimensional,mardia1980multivariate}. While our algorithm uses the same cost function as CMDS, the minimization is performed in the streaming, or online, setting. Therefore, we cannot take for granted that our algorithm will find the principal subspace of the input.  In this section, we provide analytical evidence, by a stability analysis in a stochastic setting, that our algorithm extracts the principal subspace of the input data and projects onto that subspace. We start by previewing our results and method.

Our algorithm performs a linear dimensinality reduction since the transformation between the input and the output is linear. This can be seen from the neural activity fixed point \eqref{fixed}, which we rewrite as
\begin{align}\label{yfx}
{\bf y}_T = {\bf F}_T{\bf x}_T,
\end{align}
where  ${\bf F}_T$ is a matrix defined in terms of the synaptic weight matrices ${\bf W}_T$ and ${\bf M}_T$: 
\begin{align}\label{QF}
{\bf F}_T:= \left({\bf I}_m + {\bf M}_T\right)^{-1} {\bf W}_T.
\end{align}
Relation \eqref{yfx} shows that the linear filter of a neuron, which we term a ``neural filter", is the corresponding row of ${\bf F}_T$. The space that neural filters span, the rowspace of ${\bf F}_T$, is termed a ``filter space".   

First, we prove that in the stationary state of our algorithm, neural filters are indeed orthonormal vectors (section \ref{sTh1}, Theorem \ref{Th1}). Second, we demonstrate that the orthonormal filters form a basis of a space spanned by some $m$ eigenvectors of the covariance of the inputs ${\bf C}$ (section \ref{sTh2}, Theorem \ref{Th2}). Third, by analyzing linear perturbations around the stationary state, we find that stability requires these $m$ eigenvectors to be the principal eigenvectors and, therefore, the filter space to be the principal subspace (section \ref{sTh3}, Theorem \ref{Th3}). 

These results show that even though our algorithm was derived starting from the CMDS cost function \eqref{strain}, ${\bf F}_T$ converges to the optimal solution of the representation error cost function \eqref{PCA}. This correspondence suggests that ${\bf F}_T^\top{\bf F}_T$ is the algorithm's current estimate of the projection matrix to the principal subspace. Further, in \eqref{PCA}, columns of ${\bf F}^\top$ are interpreted as data features. Then, columns of ${\bf F}_T^\top$, or neural filters, are the algorithm's estimate of such features.

 Rigorous stability analyses of PCA neural networks  \citep{oja1982simplified,oja1985stochastic,sanger1989optimal,oja1992principal,hornik1992convergence,plumbley1995} typically use the ODE method \citep{kushner1978}: Using a theorem of stochastic approximation theory \citep{kushner1978}, the convergence properties of the algorithm are determined using a corresponding deterministic differential equation\footnote{Application of stochastic approximation theory to PCA neural networks depends on a set of mathematical assumptions. See \citep{zufiria2002} for a critique of the validity of these assumptions and an alternative approach to stability analysis.}. 
 
 Unfortunately the ODE method cannot be used for our network. While the method requires learning rates that depend only on time, in our network learning rates ($1/D_{T+1,i}$) are activity dependent. Therefore we take a different approach. We directly work with the discrete-time system, assume convergence to a ``stationary state", to be defined below, and study the stability of the stationary state.
 
\subsection{Preliminaries}

We adopt a stochastic setting where the input to the network at each time point, ${\bf x}_t$, is an $n$-dimensional i.i.d. random vector with zero mean, $\left<{\bf x}_t\right>=0$, where brackets denote an average over the input distribution, and covariance ${\bf C} = \left<{\bf x}_t{\bf x}_t^\top\right>$.

Our analysis is performed for the ``stationary state" of synaptic weight updates, i.e. when averaged over the distribution of input values, the updates on ${\bf W}$ and ${\bf M}$ average to zero. This is the point of convergence of our algorithm. For the rest of the section, we drop the time index $T$ to denote stationary state variables.

The remaining dynamical variables, learning rates  $1/D_{T+1,i}$, keep decreasing at each time step due to neural activity. We assume that  the algorithm has run for a sufficiently long time such that the change in learning rate is small and it can be treated as a constant for a single update. Moreover, we assume that the algorithm converges to a stationary point sufficiently fast such that the following approximation is valid at large $T$:
\begin{align*}
\frac1{D_{T+1,i}} = \frac{1}{\sum_{t=1}^{T}y^2_{t,i}} \approx \frac{1}{T\left<y^2_{i}\right>},
\end{align*}
where ${\bf y}$ is calculated with stationary state weight matrices. 

We collect these assumptions into a definition.
\begin{defn}[Stationary State]\label{def1} 
In the stationary state, 
\begin{align*}
\left<\Delta W_{ij}\right> = \left<\Delta M_{ij}\right> = 0,
\end{align*}
and 
\begin{align*}
\frac1{D_{i}}  = \frac{1}{T\left<y^2_{i}\right>},
\end{align*}
with $T$ large.
\end{defn}

The stationary state assumption leads us to define various relations between synaptic weight matrices, summarized in the following corollary:
\begin{cor} In the stationary state,
\begin{align}\label{main1}
\left<y_ix_j\right>=\left<y_i^2\right>W_{ij},
\end{align}
and
\begin{align}\label{main2}
\left<y_iy_j\right>=\left<y_i^2\right> (M_{ij}+\delta_{ij}),
\end{align}
where $\delta_{ij}$ is the Kronecker-delta.
\end{cor}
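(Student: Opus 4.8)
The plan is to apply the stationary-state conditions of Definition \ref{def1} directly to the synaptic update rules \eqref{hah}, averaging each increment over the input distribution. First I would read off the feedforward increment from \eqref{hah}, namely $\Delta W_{ij} = y_i(x_j - W_{ij}y_i)/D_i$, where I drop the time index $T$ in accordance with the stationary-state convention. Averaging over the input distribution and imposing $\langle \Delta W_{ij}\rangle = 0$, I would treat the stationary weight $W_{ij}$ and the learning rate $1/D_i$ as deterministic constants; this is precisely what the approximation $1/D_i = 1/(T\langle y_i^2\rangle)$ in Definition \ref{def1} licenses, since it replaces the fluctuating cumulative activity by a fixed number. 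The expectation then factorizes as $\langle\Delta W_{ij}\rangle = \frac{1}{D_i}\bigl(\langle y_i x_j\rangle - W_{ij}\langle y_i^2\rangle\bigr) = 0$, and since $1/D_i \neq 0$, rearranging yields \eqref{main1} at once.

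For the lateral weights I would proceed identically. For $j\neq i$ the increment is $\Delta M_{ij} = y_i(y_j - M_{ij}y_i)/D_i$, and the same averaging argument gives $\langle y_i y_j\rangle = M_{ij}\langle y_i^2\rangle$. It then remains only to check the diagonal case $j=i$: the definition \eqref{WM} fixes $M_{ii}=0$ (the lateral connections carry no self-coupling), so the claimed identity \eqref{main2} reduces to $\langle y_i^2\rangle = \langle y_i^2\rangle(0+1)$, which holds trivially. Packaging the off-diagonal relation together with this trivial diagonal identity via the Kronecker delta gives \eqref{main2} for all $i,j$.

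The one genuine subtlety, and the step I would be most careful to justify, is pulling $1/D_i$ and the weights out of the average as constants. Naively $D_{T+1,i} = D_{T,i} + y_{T,i}^2$ depends on the current output $y_{T,i}$, so the factor $1/D_{T+1,i}$ is correlated with the numerator $y_{T,i}(x_{T,j} - W_{T,ij}y_{T,i})$ and the expectation would fail to factorize. The stationary-state approximation of Definition \ref{def1} is exactly what removes this difficulty: at large $T$ the single-sample contribution $y_{T,i}^2$ is negligible relative to the accumulated $\sum_{t<T} y_{t,i}^2 \approx T\langle y_i^2\rangle$, so $1/D_{T+1,i}$ becomes an $O(1/T)$ deterministic constant that is uncorrelated with the current sample and may be taken outside the average. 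Once this is granted, each relation follows from a single line of averaging applied to the corresponding update rule, so I do not anticipate any further obstacle.
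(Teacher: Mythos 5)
Your proposal is correct and follows essentially the same route as the paper: impose $\left<\Delta W_{ij}\right>=\left<\Delta M_{ij}\right>=0$ on the update rules \eqref{hah}, factor out the (deterministic, large-$T$) learning rate, and handle the diagonal of ${\bf M}$ by observing that $M_{ii}=0$ makes \eqref{main2} a tautology there, which is exactly the paper's ``add an identity matrix'' step. Your extra paragraph justifying why $1/D_{T+1,i}$ decouples from the current sample is a more explicit account of what Definition \ref{def1} already assumes, not a different argument.
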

\begin{proof}
Stationarity assumption when applied to the update rule on ${\bf W}$ \eqref{hah} leads immediately to \eqref{main1}. Stationarity assumption applied to the update rule on ${\bf M}$ \eqref{hah} gives:
\begin{align*}
\left<y_iy_j\right>=\left<y_i^2\right> M_{ij}, \qquad i\neq j. 
\end{align*}
The last equality does not hold for $i=j$ since diagonal elements of ${\bf M}$ are zero. To cover the case $i=j$, we add an identity matrix to ${\bf M}$, and hence one recovers \eqref{main2}.  
\end{proof}
\begin{rmk}Note that \eqref{main2} implies $\left<y_i^2\right> M_{ij} = \left<y_j^2\right> M_{ji}$, i.e. that lateral connection weights are not symmetrical.
\end{rmk}

\subsection{Orthonormality of neural filters}\label{sTh1}

Here we prove the orthonormality of neural filters in the stationary state. First, we need the following lemma:
\begin{lem}\label{lem1} In the stationary state, the following equality holds:
\begin{align}\label{first}
 {\bf I}_m +  {\bf M} = {\bf W} {\bf F}^\top.
\end{align}
\end{lem}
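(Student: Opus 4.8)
The plan is to recast the two stationary-state identities of the preceding Corollary in matrix form and then collapse them using the neural-filter relation ${\bf y}_T = {\bf F}_T{\bf x}_T$. Introduce the diagonal matrix ${\bf \Lambda}$ with entries $\Lambda_{ii} = \left<y_i^2\right>$ (and zeros off the diagonal). Then \eqref{main1} reads $\left<{\bf y}{\bf x}^\top\right> = {\bf \Lambda}{\bf W}$ and \eqref{main2} reads $\left<{\bf y}{\bf y}^\top\right> = {\bf \Lambda}\left({\bf I}_m+{\bf M}\right)$, where $\left<{\bf y}{\bf x}^\top\right>$ and $\left<{\bf y}{\bf y}^\top\right>$ denote the matrices with entries $\left<y_ix_j\right>$ and $\left<y_iy_j\right>$. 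Since each $\left<y_i^2\right>$ is strictly positive (the same nondegeneracy that makes the learning rates $1/D_i$ well defined), ${\bf \Lambda}$ is invertible, so these give ${\bf W} = {\bf \Lambda}^{-1}\left<{\bf y}{\bf x}^\top\right>$ and ${\bf I}_m+{\bf M} = {\bf \Lambda}^{-1}\left<{\bf y}{\bf y}^\top\right>$.

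Next I would use that in the stationary state ${\bf F}$ is a constant matrix, so it pulls out of the averages: from ${\bf y} = {\bf F}{\bf x}$ and ${\bf C} = \left<{\bf x}{\bf x}^\top\right>$ one gets $\left<{\bf y}{\bf x}^\top\right> = {\bf F}{\bf C}$ and $\left<{\bf y}{\bf y}^\top\right> = {\bf F}{\bf C}{\bf F}^\top$. Substituting the first into the expression for ${\bf W}$ and postmultiplying by ${\bf F}^\top$ yields the chain ${\bf W}{\bf F}^\top = {\bf \Lambda}^{-1}{\bf F}{\bf C}{\bf F}^\top = {\bf \Lambda}^{-1}\left<{\bf y}{\bf y}^\top\right> = {\bf \Lambda}^{-1}{\bf \Lambda}\left({\bf I}_m+{\bf M}\right) = {\bf I}_m+{\bf M}$, which is exactly \eqref{first}.

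An equivalent, purely component-wise route avoids introducing ${\bf \Lambda}$ altogether: compute $\left({\bf W}{\bf F}^\top\right)_{ij} = \sum_k W_{ik}F_{jk}$, replace $W_{ik}$ by $\left<y_ix_k\right>/\left<y_i^2\right>$ via \eqref{main1}, recognize $\sum_k F_{jk}\left<y_ix_k\right> = \left<y_i\sum_k F_{jk}x_k\right> = \left<y_iy_j\right>$ by ${\bf y}={\bf F}{\bf x}$, and finally apply \eqref{main2} to get $\left({\bf W}{\bf F}^\top\right)_{ij} = \left<y_iy_j\right>/\left<y_i^2\right> = M_{ij}+\delta_{ij}$.

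There is no real difficulty here beyond bookkeeping: the only point requiring care is the invertibility of ${\bf \Lambda}$ (or, equivalently, that no $\left<y_i^2\right>$ vanishes), and the fact that ${\bf F}$ may be treated as deterministic inside the averages because we are evaluating at the stationary state of the weights. The substance of the lemma is that the two stationary-state averages \eqref{main1} and \eqref{main2}, once tied together through the linear input-output map ${\bf y}={\bf F}{\bf x}$, force the stated algebraic relation between ${\bf W}$, ${\bf F}$, and ${\bf M}$.
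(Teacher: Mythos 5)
Your proof is correct and, in its component-wise form at the end, is essentially identical to the paper's own argument: both combine \eqref{main1} and \eqref{main2} through the substitution ${\bf y}={\bf F}{\bf x}$ and cancel the factor $\left<y_i^2\right>$. The matrix reformulation via ${\bf \Lambda}$ is just a repackaging of the same steps, so there is nothing substantively different to compare.
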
 
\begin{proof}  By \eqref{main2}, $\left<y_i^2\right>\left(M_{ik}+\delta_{ik}\right) = \left<y_iy_k\right>$. Using ${\bf y} = {\bf F} {\bf x}$, we substitute for $y_k$ on the right hand side: $\left<y_i^2\right>\left(M_{ik}+\delta_{ik}\right)=  \sum_{j}F_{kj}\left<y_ix_j\right>$. Next, the stationarity condition \eqref{main1} yields: $\left<y_i^2\right>\left(M_{ik}+\delta_{ik}\right)=  \left<y_i^2\right>\sum_{j}F_{kj}W_{ij}$. Canceling $ \left<y_i^2\right>$ on both sides proves the Lemma.
\end{proof}
Now we can prove our theorem.
\begin{Th}\label{Th1}
In the stationary state, neural filters are orthonormal:
\begin{align}\label{orthF}
{\bf F}{\bf F}^\top &= {\bf I}_m.
\end{align}
\end{Th}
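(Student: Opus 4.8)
The plan is to combine Lemma \ref{lem1} with the defining relation for the neural filter matrix in \eqref{QF}, exploiting the fact that both supply an expression for $\mathbf{I}_m + \mathbf{M}$ in terms of $\mathbf{W}$ and $\mathbf{F}$. The key observation is that \eqref{QF} can be read as a statement about $\mathbf{W}$: multiplying both sides on the left by $\mathbf{I}_m + \mathbf{M}$ gives
\begin{align*}
\mathbf{W} = \left(\mathbf{I}_m + \mathbf{M}\right)\mathbf{F}.
\end{align*}
This rearrangement is the only ingredient I need beyond the lemma itself.

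Next I would substitute this expression for $\mathbf{W}$ into the right-hand side of Lemma \ref{lem1}, namely $\mathbf{I}_m + \mathbf{M} = \mathbf{W}\mathbf{F}^\top$, obtaining
\begin{align*}
\mathbf{I}_m + \mathbf{M} = \left(\mathbf{I}_m + \mathbf{M}\right)\mathbf{F}\mathbf{F}^\top.
\end{align*}
Because $\mathbf{F}$ is well-defined through \eqref{QF}, the matrix $\mathbf{I}_m + \mathbf{M}$ is invertible; left-multiplying both sides by its inverse cancels the common factor and leaves $\mathbf{F}\mathbf{F}^\top = \mathbf{I}_m$, which is exactly \eqref{orthF}.

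The computation itself is a two-line manipulation, so the substance of the argument lies entirely in Lemma \ref{lem1}, which in turn packages the stationarity conditions \eqref{main1} and \eqref{main2}. The only point requiring care is the implicit invertibility of $\mathbf{I}_m + \mathbf{M}$: this is already presupposed when $\mathbf{F}$ is defined through $\left(\mathbf{I}_m + \mathbf{M}\right)^{-1}$ in \eqref{QF}, so no new assumption is introduced, though it is worth flagging that the conclusion depends on it. I would expect this to be the only genuine subtlety; there is no hard analytic obstacle here, since the real work has already been discharged in deriving the stationary-state relations and in proving the lemma.
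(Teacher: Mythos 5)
Your proposal is correct and is essentially the paper's own argument: the paper likewise combines Lemma \ref{lem1} with the definition \eqref{QF}, writing ${\bf F}{\bf F}^\top = ({\bf I}_m+{\bf M})^{-1}{\bf W}{\bf F}^\top$ and then replacing ${\bf W}{\bf F}^\top$ by ${\bf I}_m+{\bf M}$, which is the same two-line cancellation you perform in the opposite order. Your remark about the implicit invertibility of ${\bf I}_m+{\bf M}$ is a fair observation, but it changes nothing of substance.
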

\begin{proof}First, we substitute for ${\bf F}$ (but not for ${\bf F^\top}$) its definition \eqref{QF}: ${\bf F}{\bf F}^\top = \left({\bf I}_m +  {\bf M} \right)^{-1}{\bf W}{\bf F}^\top$. Next, using Lemma \ref{lem1}, we substitute ${\bf W} {\bf F}^\top$ by $\left({\bf I}_m +  {\bf M} \right)$. The right hand side becomes $\left({\bf I}_m +  {\bf M} \right)^{-1}\left({\bf I}_m +  {\bf M} \right)={\bf I}_m$.
\end{proof}
\begin{rmk} Theorem \ref{Th1} implies that ${\rm rank (\bf F})=m$. 
\end{rmk}

\subsection{Neural filters and their relationship to the eigenspace of the covariance matrix}\label{sTh2}

How is the filter space related to the input? We partially answer this question in Theorem \ref{Th2}, using the following lemma:
\begin{lem}\label{lem2} In the stationary state, ${\bf F}^\top{\bf F}$ and ${\bf C}$ commute:
\begin{align}\label{FC}
 {\bf F}^\top{\bf F} {\bf C} &= {\bf C}{\bf F}^\top{\bf F}.
\end{align}
\end{lem}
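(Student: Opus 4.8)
The plan is to show that $\mathbf{F}^\top\mathbf{F}\,\mathbf{C}$ and $\mathbf{C}\,\mathbf{F}^\top\mathbf{F}$ are in fact the \emph{same} matrix, by reducing each to a common expression. First I would recast the scalar stationarity relations of the Corollary into matrix form. Introduce the diagonal matrix ${\bf \Lambda} := \mathrm{diag}\!\left(\langle y_1^2\rangle,\ldots,\langle y_m^2\rangle\right)$. Using the linear map ${\bf y}={\bf F}{\bf x}$ from \eqref{yfx} and ${\bf C}=\langle{\bf x}{\bf x}^\top\rangle$, relation \eqref{main1} reads $\langle{\bf y}{\bf x}^\top\rangle={\bf F}{\bf C}={\bf \Lambda}{\bf W}$, while relation \eqref{main2} reads $\langle{\bf y}{\bf y}^\top\rangle={\bf \Lambda}({\bf I}_m+{\bf M})$.

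Next I would isolate the one nontrivial ingredient, the $\bf \Lambda$-weighted symmetry already recorded in the Remark: since $\langle{\bf y}{\bf y}^\top\rangle$ is manifestly symmetric and equals ${\bf \Lambda}({\bf I}_m+{\bf M})$, we must have ${\bf \Lambda}({\bf I}_m+{\bf M})=({\bf I}_m+{\bf M}^\top){\bf \Lambda}$, i.e. ${\bf \Lambda}{\bf M}={\bf M}^\top{\bf \Lambda}$. This is the fact that compensates for the asymmetry of the lateral weights ${\bf M}$.

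With these in hand the computation is short. From the definition \eqref{QF} I have ${\bf W}=({\bf I}_m+{\bf M}){\bf F}$, so $\mathbf{F}^\top\mathbf{F}\,\mathbf{C}=\mathbf{F}^\top({\bf F}{\bf C})=\mathbf{F}^\top{\bf \Lambda}{\bf W}=\mathbf{F}^\top{\bf \Lambda}({\bf I}_m+{\bf M}){\bf F}$. Applying the weighted symmetry ${\bf \Lambda}({\bf I}_m+{\bf M})=({\bf I}_m+{\bf M}^\top){\bf \Lambda}$ and then recombining $\mathbf{F}^\top({\bf I}_m+{\bf M}^\top)=\big(({\bf I}_m+{\bf M}){\bf F}\big)^\top={\bf W}^\top$ gives $\mathbf{F}^\top\mathbf{F}\,\mathbf{C}={\bf W}^\top{\bf \Lambda}{\bf F}$. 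For the other side, transposing ${\bf F}{\bf C}={\bf \Lambda}{\bf W}$ and using that ${\bf C}$ and ${\bf \Lambda}$ are symmetric yields ${\bf C}\mathbf{F}^\top={\bf W}^\top{\bf \Lambda}$, hence $\mathbf{C}\,\mathbf{F}^\top\mathbf{F}={\bf W}^\top{\bf \Lambda}{\bf F}$ as well. Both sides coincide, proving \eqref{FC}.

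I expect the only genuine obstacle to be the bookkeeping around the non-symmetry of ${\bf M}$: a naive manipulation that treats ${\bf M}$ as symmetric, or that drops the diagonal weighting ${\bf \Lambda}$, will stall, because ${\bf M}\neq{\bf M}^\top$ in general. The step that makes the two expressions collapse onto the common matrix ${\bf W}^\top{\bf \Lambda}{\bf F}$ is recognizing and using the correct $\bf \Lambda$-weighted symmetry ${\bf \Lambda}{\bf M}={\bf M}^\top{\bf \Lambda}$; everything else is routine matrix algebra resting on \eqref{main1}, \eqref{QF}, and the symmetry of ${\bf C}$.
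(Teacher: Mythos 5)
Your proof is correct and is essentially the paper's own argument recast in matrix notation: the paper's index-chain in Appendix \ref{A3} passes through exactly your intermediate expressions ${\bf F}^\top{\bf \Lambda}{\bf W}$ and ${\bf W}^\top{\bf \Lambda}{\bf F}$ (written componentwise as $\sum_k F_{ki}\left<y_k^2\right>W_{kj}$ and $\sum_p W_{pi}\left<y_p^2\right>F_{pj}$), and its pivotal step is the same $\bf\Lambda$-weighted symmetry $\left<y_i^2\right>(M_{ij}+\delta_{ij})=\left<y_j^2\right>(M_{ji}+\delta_{ji})$ that you isolate from the Remark. The only difference is presentational (meeting in the middle at ${\bf W}^\top{\bf \Lambda}{\bf F}$ rather than running one directed chain), so no further comparison is needed.
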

\begin{proof} See Appendix \ref{A3}.
\end{proof}
Now we can state our second theorem.
\begin{Th}\label{Th2}
At the stationary state state, the filter space is an $m$-dimensional subspace in $\mathbb{R}^n$ that is spanned by some $m$ eigenvectors of the covariance matrix.
\end{Th}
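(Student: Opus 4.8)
The plan is to recognize $\mathbf{F}^\top\mathbf{F}$ as the orthogonal projector onto the filter space and then use its commutation with $\mathbf{C}$ from Lemma \ref{lem2} to show that the filter space is an invariant subspace of $\mathbf{C}$, from which the eigenvector decomposition follows by the spectral theorem.

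First I would establish that $\mathbf{F}^\top\mathbf{F}$ is a rank-$m$ orthogonal projection onto the filter space. It is manifestly symmetric, and Theorem \ref{Th1} makes it idempotent, since
\begin{align*}
\left(\mathbf{F}^\top\mathbf{F}\right)^2 = \mathbf{F}^\top\left(\mathbf{F}\mathbf{F}^\top\right)\mathbf{F} = \mathbf{F}^\top\mathbf{F},
\end{align*}
using \eqref{orthF}. A symmetric idempotent matrix is an orthogonal projector, and its range is exactly the column space of $\mathbf{F}^\top$, i.e. the rowspace of $\mathbf{F}$, which is the $m$-dimensional filter space (recall $\mathrm{rank}(\mathbf{F})=m$). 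Writing $\mathbf{P}:=\mathbf{F}^\top\mathbf{F}$, the eigenvalues of $\mathbf{P}$ are $1$ (with eigenspace equal to the filter space, dimension $m$) and $0$ (with eigenspace equal to its orthogonal complement, dimension $n-m$).

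Next I would invoke Lemma \ref{lem2}, which says $\mathbf{P}$ and $\mathbf{C}$ commute, see \eqref{FC}. Two real symmetric matrices that commute are simultaneously diagonalizable, so each eigenspace of $\mathbf{P}$ is an invariant subspace of $\mathbf{C}$. In particular, $\mathbf{C}$ maps the filter space (the eigenvalue-$1$ eigenspace of $\mathbf{P}$) into itself, so the filter space is $\mathbf{C}$-invariant. Restricting the symmetric operator $\mathbf{C}$ to this $m$-dimensional invariant subspace gives a symmetric operator on it, which by the spectral theorem possesses an orthonormal basis of $m$ eigenvectors, each of which is also an eigenvector of the full matrix $\mathbf{C}$. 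These $m$ eigenvectors of $\mathbf{C}$ span the filter space, which is the claim.

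The main conceptual step is the simultaneous diagonalization of the commuting symmetric pair $(\mathbf{P},\mathbf{C})$ and its translation into $\mathbf{C}$-invariance of the range of the projector; once invariance is in hand, the conclusion is immediate. The only subtlety worth flagging is degeneracy in the spectrum of $\mathbf{C}$: when an eigenvalue of $\mathbf{C}$ is repeated, the eigenvectors that span the filter space are not uniquely determined, but one may always select an orthonormal set within each eigenspace, so the weaker statement that the filter space is \emph{spanned by some} $m$ eigenvectors still holds. Note that this theorem does not yet pin down \emph{which} $m$ eigenvectors are selected; that refinement is exactly what the stability analysis of Theorem \ref{Th3} is expected to supply.
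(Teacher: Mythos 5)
Your proof is correct and follows essentially the same route as the paper: Theorem \ref{Th1} pins down the spectrum of $\mathbf{F}^\top\mathbf{F}$ as $m$ unit eigenvalues and $n-m$ zeros, and the commutation of Lemma \ref{lem2} then identifies the unit-eigenvalue eigenspace (the filter space) with a span of eigenvectors of $\mathbf{C}$. Your treatment is somewhat more careful than the paper's one-line argument --- in particular the explicit invariant-subspace reasoning and the remark about degenerate eigenvalues of $\mathbf{C}$ tighten the informal claim that commuting symmetric matrices ``share the same eigenvectors'' --- but the underlying idea is identical.
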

\begin{proof}
Because ${\bf F}^\top{\bf F}$ and ${\bf C}$ commute (Lemma \ref{lem2}), they must share the same eigenvectors. Equation \eqref{orthF}  of Theorem \ref{Th1} implies that $m$ eigenvalues of ${\bf F}^\top{\bf F}$ are unity and the rest are zero. Eigenvectors associated with unit eigenvalues span the  rowspace of ${\bf F}$\footnote{If this fact is not familiar to the reader, we recommend Strang's \citep{strang2009} discussion of Singular Value Decomposition.} and are identical to some $m$ eigenvectors of {\bf C}.
\end{proof}

Which $m$ eigenvectors of {\bf C} span the filter space? To show that these are the eigenvectors corresponding to the largest eigenvalues of ${\bf C}$, we perform a linear stability analysis around the stationary point and show that any other combination would be unstable.

\subsection{Linear stability requires neural filters to span a principal subspace}\label{sTh3}

The strategy here is to perturb ${\bf F}$ from its equilibrium value and show that the perturbation is linearly stable only if the row space of ${\bf F}$ is the space spanned by the eigenvectors corresponding to the $m$ highest eigenvalues of ${\bf C}$. To prove this result, we will need two more lemmas.
\begin{lem}\label{lem3}  Let ${\bf H}$ be an $m\times n$ real matrix with orthonormal rows and ${\bf G}$ is an $(n-m) \times n$ real matrix with orthonormal rows, whose rows are chosen to be orthogonal to the rows of ${\bf H}$. Any ${n\times m}$ real matrix ${\bf Q}$ can be decomposed as:
\begin{align*}
{\bf Q} = {\bf A} \,{\bf H} + {\bf S}\, {\bf H}+ {\bf B}\, {\bf G},
\end{align*}
where ${\bf A}$ is an $m\times m$ skew-symmetric matrix, ${\bf S}$ is an $m\times m$ symmetric matrix and ${\bf B}$ is an $m\times(n-m)$ matrix. 
\end{lem}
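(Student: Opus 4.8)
The plan is to exploit the fact that the rows of ${\bf H}$ and ${\bf G}$ together form a complete orthonormal basis of $\mathbb{R}^n$. Since ${\bf H}$ contributes $m$ orthonormal rows and ${\bf G}$ contributes $n-m$ orthonormal rows that are all orthogonal to the rows of ${\bf H}$, stacking them vertically produces an $n\times n$ matrix ${\bf P}$ whose top $m$ rows are ${\bf H}$ and whose bottom $n-m$ rows are ${\bf G}$, and this ${\bf P}$ is orthogonal. The hypotheses ${\bf H}{\bf H}^\top={\bf I}_m$, ${\bf G}{\bf G}^\top={\bf I}_{n-m}$ and ${\bf H}{\bf G}^\top=0$ are exactly the statement ${\bf P}{\bf P}^\top={\bf I}_n$; because ${\bf P}$ is square this also forces ${\bf P}^\top{\bf P}={\bf I}_n$, which reads blockwise as the resolution-of-identity relation ${\bf H}^\top{\bf H}+{\bf G}^\top{\bf G}={\bf I}_n$. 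This completeness relation is the workhorse of the whole argument.

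Next I would split ${\bf Q}$ using this identity. Writing ${\bf Q}={\bf Q}\,{\bf I}_n={\bf Q}\left({\bf H}^\top{\bf H}+{\bf G}^\top{\bf G}\right)=\left({\bf Q}{\bf H}^\top\right){\bf H}+\left({\bf Q}{\bf G}^\top\right){\bf G}$, I define ${\bf K}:={\bf Q}{\bf H}^\top$, an $m\times m$ matrix, and ${\bf B}:={\bf Q}{\bf G}^\top$, an $m\times(n-m)$ matrix. This already gives ${\bf Q}={\bf K}{\bf H}+{\bf B}{\bf G}$, so ${\bf B}$ is the required final block and no further work is needed on the ${\bf G}$ part. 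To obtain the stated form I then apply the canonical decomposition of a square matrix into its symmetric and skew-symmetric parts: set ${\bf S}:=\tfrac12\left({\bf K}+{\bf K}^\top\right)$ and ${\bf A}:=\tfrac12\left({\bf K}-{\bf K}^\top\right)$, so that ${\bf K}={\bf S}+{\bf A}$ with ${\bf S}$ symmetric and ${\bf A}$ skew-symmetric. Substituting yields ${\bf Q}={\bf A}{\bf H}+{\bf S}{\bf H}+{\bf B}{\bf G}$, as claimed.

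I do not expect a genuine obstacle here: the argument is a projection onto the two orthogonal blocks via the resolution of identity, followed by a routine symmetric/skew-symmetric split. The only points requiring care are bookkeeping ones. First, one must confirm that orthogonality of ${\bf P}$ really delivers ${\bf H}^\top{\bf H}+{\bf G}^\top{\bf G}={\bf I}_n$ rather than merely the three given inner-product relations; this follows because ${\bf P}^\top={\bf P}^{-1}$ for a square orthogonal matrix. Second, the dimensions must be read consistently: the right-hand side ${\bf A}{\bf H}+{\bf S}{\bf H}+{\bf B}{\bf G}$ is $m\times n$, so ${\bf Q}$ should be treated as an $m\times n$ matrix, matching its later role as a perturbation of the $m\times n$ filter matrix ${\bf F}$. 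Note finally that once ${\bf K}={\bf Q}{\bf H}^\top$ and ${\bf B}={\bf Q}{\bf G}^\top$ are fixed in this way, the pair $({\bf S},{\bf A})$ is uniquely determined, so the decomposition into the three named blocks is forced.
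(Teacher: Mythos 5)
Your proposal is correct and follows essentially the same route as the paper: both rest on the completeness relation ${\bf H}^\top{\bf H}+{\bf G}^\top{\bf G}={\bf I}_n$ and arrive at the identical definitions ${\bf B}={\bf Q}{\bf G}^\top$, ${\bf A}=\tfrac12({\bf Q}{\bf H}^\top-{\bf H}{\bf Q}^\top)$, ${\bf S}=\tfrac12({\bf Q}{\bf H}^\top+{\bf H}{\bf Q}^\top)$, merely presented constructively rather than by direct verification. Your observation that ${\bf Q}$ must be read as $m\times n$ (matching $\delta{\bf F}$) rather than $n\times m$ as stated is a correct catch of a typo in the lemma statement.
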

\begin{proof} Define ${\bf B} :=  {\bf Q}\, {\bf G}^\top$, ${\bf A} :=  \frac 12 \left({\bf Q}\, {\bf H}^\top-{\bf H}\,{\bf Q}^\top\right)$ and ${\bf S} :=  \frac 12 \left({\bf Q}\, {\bf H}^\top+{\bf H}\,{\bf Q}^\top\right)$. Then, ${\bf A} \,{\bf H} + {\bf S}\, {\bf H}+ {\bf B}\, {\bf G}={\bf Q}\left({\bf H}^\top{\bf H}+{\bf G}^\top{\bf G}\right)={\bf Q}$.
\end{proof}

Let's denote an arbitrary  perturbation of ${\bf F}$ as $\delta{\bf F}$, where a small parameter is implied. We can use Lemma \ref{lem3} to decompose $\delta{\bf F}$ as
\begin{align}\label{dF}
\delta{\bf F} = \delta {\bf A} \,{\bf F} + \delta {\bf S}\, {\bf F}+\delta {\bf B}\, {\bf G},
\end{align}
where the rows of ${\bf G}$ are orthogonal to the rows of ${\bf F}$. Skew-symmetric ${\bf \delta A}$ corresponds to rotations of filters within the filter space, i.e. it keeps neural filters orthonormal. Symmetric ${\bf \delta S}$ keeps the filter space invariant but destroys orthonormality. $\delta {\bf B}$ is a perturbation that takes the neural filters outside of the filter space. 

Next, we calculate how $\delta {\bf F}$ evolves under the learning rule, i.e. $\left<\Delta \delta{\bf F}\right>$. 
\begin{lem}\label{lem4} A perturbation to the stationary state has the following evolution under the learning rule to linear order in perturbation and linear order in $T^{-1}$:
\begin{multline}\label{ddf}
\left<\Delta \delta{F}_{ij}\right> =\frac{1}{T}\sum_k \frac{\left({\bf I}_m+\bf{M}\right)^{-1}_{ik}}{ \left<y^2_k\right>}\left[ \sum_l \delta F_{kl}C_{lj}-\sum_{lpr}\delta F_{kl}F_{rp}C_{lp}F_{rj} \right. \\
\left.- \sum_{lpr} F_{kl}\delta F_{rp}C_{lp}F_{rj}\right]-\frac 1T \delta F_{ij}.
\end{multline}
\end{lem}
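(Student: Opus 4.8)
The plan is to treat the neural-filter matrix $\mathbf{F} = (\mathbf{I}_m+\mathbf{M})^{-1}\mathbf{W}$ of \eqref{QF} as a function of the two dynamical variables $\mathbf{W}$ and $\mathbf{M}$, push the averaged learning rule \eqref{hah} through this map, and linearize consistently to first order in the perturbation and in $T^{-1}$. Since the stationary reference value of $\mathbf{F}$ is fixed, the perturbation obeys $\left<\Delta\delta\mathbf{F}\right>=\left<\Delta\mathbf{F}\right>$, so it suffices to evaluate the averaged increment of $\mathbf{F}$ at the perturbed state and keep its part linear in $\delta\mathbf{F}$. The conceptual payoff I am aiming for is that, although $\mathbf{W}$ and $\mathbf{M}$ evolve through separate rules, their induced evolution of $\mathbf{F}$ closes on $\mathbf{F}$ alone.

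First I would differentiate \eqref{QF}: expanding $(\mathbf{I}_m+\mathbf{M}+\Delta\mathbf{M})^{-1}$ to first order gives $\Delta\mathbf{F} = (\mathbf{I}_m+\mathbf{M})^{-1}\left(\Delta\mathbf{W} - \Delta\mathbf{M}\,\mathbf{F}\right)$. Because the increments in \eqref{hah} are $O(T^{-1})$, this is already accurate to linear order in $T^{-1}$. I would then average over the input distribution, under which the deterministic prefactor $(\mathbf{I}_m+\mathbf{M})^{-1}$ and the factor $\mathbf{F}$ pull out, leaving only $\left<\Delta\mathbf{W}\right>$ and $\left<\Delta\mathbf{M}\right>$ to evaluate. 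Using the Stationary State learning rate $1/D_k = 1/(T\left<y_k^2\right>)$ from Definition \ref{def1}, together with $\mathbf{y}=\mathbf{F}\mathbf{x}$ so that $\left<y_kx_j\right> = (\mathbf{F}\mathbf{C})_{kj}$ and $\left<y_ky_r\right> = (\mathbf{F}\mathbf{C}\mathbf{F}^\top)_{kr}$, this yields $\left<\Delta\mathbf{F}\right>$ as a prefactor $(\mathbf{I}_m+\mathbf{M})^{-1}/(T\left<y_k^2\right>)$ times a bracket assembled from $\mathbf{W}$, $\mathbf{M}$, $\mathbf{F}$ and $\mathbf{C}$.

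The crux is the algebraic simplification that makes the evolution close. Inverting \eqref{QF} gives the identity $\mathbf{W}=(\mathbf{I}_m+\mathbf{M})\mathbf{F}$, i.e. $W_{kj} - \sum_{r\neq k}M_{kr}F_{rj} = F_{kj}$ (using $M_{kk}=0$). Feeding this in, the Hebbian decay term $-W_{kj}\left<y_k^2\right>$ and the missing $r=k$ contribution of the lateral sum both produce $\left<y_k^2\right>F_{kj}$ and cancel, collapsing the bracket to $(\mathbf{F}\mathbf{C})_{kj} - (\mathbf{F}\mathbf{C}\mathbf{F}^\top\mathbf{F})_{kj}$. I would then verify this vanishes at the stationary state using Theorem \ref{Th1} ($\mathbf{F}\mathbf{F}^\top=\mathbf{I}_m$) and Lemma \ref{lem2} (commutativity of $\mathbf{F}^\top\mathbf{F}$ with $\mathbf{C}$), since $\mathbf{F}\mathbf{C}\mathbf{F}^\top\mathbf{F}=\mathbf{F}\mathbf{F}^\top\mathbf{F}\mathbf{C}=\mathbf{F}\mathbf{C}$. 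Beyond confirming stationarity, this lets me evaluate the prefactor at its stationary value during linearization, because the bracket itself is already first order in $\delta\mathbf{F}$.

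Finally I would linearize $\mathbf{F}\mapsto\mathbf{F}+\delta\mathbf{F}$ in $(\mathbf{F}\mathbf{C}-\mathbf{F}\mathbf{C}\mathbf{F}^\top\mathbf{F})$, producing four terms by the product rule. Three of them are exactly the bracketed terms of \eqref{ddf}. The fourth, $-(\mathbf{F}\mathbf{C}\mathbf{F}^\top\delta\mathbf{F})_{kj}$, I would treat separately: relation \eqref{main2} gives $\mathbf{F}\mathbf{C}\mathbf{F}^\top = \mathbf{D}_y(\mathbf{I}_m+\mathbf{M})$ with $\mathbf{D}_y=\mathrm{diag}(\left<y_k^2\right>)$, so after contraction with the prefactor $(\mathbf{I}_m+\mathbf{M})^{-1}\mathbf{D}_y^{-1}$ it telescopes precisely to $-\frac1T\delta F_{ij}$, the isolated last term of \eqref{ddf}. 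I expect the main obstacle to be bookkeeping rather than conceptual depth: getting the closure identity $\mathbf{W}=(\mathbf{I}_m+\mathbf{M})\mathbf{F}$ and the $r=k$ cancellation exactly right, deciding which quantities are perturbed versus frozen at stationary values (in particular keeping the accumulated learning rate $1/D_k$ fixed and not perturbing $\left<y_k^2\right>$ in the prefactor), and maintaining consistency to first order in $\delta\mathbf{F}$ and $T^{-1}$ simultaneously throughout.
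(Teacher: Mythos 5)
Your proposal is correct and reaches \eqref{ddf} by a valid route, but it organizes the computation differently from the paper, and the reorganization is worth noting. The paper linearizes first: it writes $(\delta\hat{\bf M}){\bf F}+\hat{\bf M}(\delta{\bf F})=\delta{\bf W}$, expands $\left<\Delta\delta{\bf W}\right>$ and $\left<\Delta\delta{\bf M}\right>$ term by term to first order in $\delta{\bf F},\delta{\bf W},\delta{\bf M}$, combines them, and only then eliminates the ${\bf M}$ and $\delta{\bf M}$ terms through a sequence of cancellations. You instead close the dynamics first: using the exact identity $\tilde{\bf W}=(\mathbf{I}_m+\tilde{\bf M})\tilde{\bf F}$ at the perturbed state (with the $r=k$ diagonal term absorbed for free, exactly as the paper's $\delta_{ij}\left<\tilde y_i^2\right>$ subtraction does), you collapse the averaged increment to the single expression $(\mathbf{I}_m+{\bf M})^{-1}{\bf D}_y^{-1}\left(\tilde{\bf F}{\bf C}-\tilde{\bf F}{\bf C}\tilde{\bf F}^\top\tilde{\bf F}\right)/T$, verify it vanishes at stationarity via Theorem \ref{Th1} and Lemma \ref{lem2}, and then differentiate once by the product rule; the fourth product-rule term telescopes to $-\frac1T\delta F_{ij}$ via \eqref{main2}, just as the paper obtains it via \eqref{Qinv2}. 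What your route buys is transparency and robustness: the bracket being manifestly first order justifies freezing the prefactors $(\mathbf{I}_m+{\bf M})^{-1}$ and $\left<y_k^2\right>$ at their stationary values in one stroke, and the intermediate object ${\bf F}{\bf C}-{\bf F}{\bf C}{\bf F}^\top{\bf F}$ exposes the effective averaged flow of the filters as an Oja-subspace-type flow, which the paper's term-by-term bookkeeping obscures. The two derivations are algebraically equivalent and rely on the same ingredients (\eqref{QF}, \eqref{main1}, \eqref{main2}, and the frozen learning rate $1/(T\left<y_k^2\right>)$), so there is no gap to repair.
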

\begin{proof} Proof is provided in Appendix \ref{A4}.
\end{proof}

Now, we can state our main result in the following theorem:
\begin{Th}\label{Th3}
The stationary state of neuronal filters F is stable, in large-$T$ limit, only if the $m$ dimensional filter space is spanned by the eigenvectors of the covariance matrix corresponding to the $m$ highest eigenvectors.
\end{Th}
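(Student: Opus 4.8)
The plan is to linearize the stochastic learning dynamics about the stationary state using Lemma~\ref{lem4}, and then show that the three families of perturbations supplied by Lemma~\ref{lem3} evolve independently, with only the out-of-subspace family able to distinguish one choice of $m$ eigenvectors from another. First I would put the componentwise evolution \eqref{ddf} into matrix form. The prefactor simplifies: since $\left<{\bf y}{\bf y}^\top\right> = {\bf F}{\bf C}{\bf F}^\top$, equation \eqref{main2} gives ${\bf F}{\bf C}{\bf F}^\top = {\rm diag}\!\left(\left<y_i^2\right>\right)\left({\bf I}_m+{\bf M}\right)$, so the factor $({\bf I}_m+{\bf M})^{-1}$ divided by $\left<y_k^2\right>$ is exactly $({\bf F}{\bf C}{\bf F}^\top)^{-1}$. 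Thus
\[
\left<\Delta\delta{\bf F}\right> = \tfrac1T ({\bf F}{\bf C}{\bf F}^\top)^{-1}\left[\,\delta{\bf F}{\bf C} - \delta{\bf F}{\bf C}{\bf F}^\top{\bf F} - {\bf F}{\bf C}\,\delta{\bf F}^\top{\bf F}\,\right] - \tfrac1T\delta{\bf F}.
\]

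Next I would insert $\delta{\bf F} = \delta{\bf A}{\bf F} + \delta{\bf S}{\bf F} + \delta{\bf B}{\bf G}$. The algebra is cleanest in the eigenbasis of ${\bf C}$, which Theorem~\ref{Th2} licenses: taking the rows of ${\bf F}$ and ${\bf G}$ to be eigenvectors of ${\bf C}$ makes $\hat{\bf C} := {\bf F}{\bf C}{\bf F}^\top$ and $\hat{\bf C}_G := {\bf G}{\bf C}{\bf G}^\top$ diagonal, carrying the in-subspace and out-of-subspace eigenvalues, and yields the identities ${\bf F}{\bf C} = \hat{\bf C}{\bf F}$, ${\bf G}{\bf C} = \hat{\bf C}_G{\bf G}$ and ${\bf G}{\bf C}{\bf F}^\top = {\bf 0}$. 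Substituting each family and using ${\bf F}{\bf F}^\top = {\bf I}_m$ (Theorem~\ref{Th1}), ${\bf G}{\bf F}^\top = {\bf 0}$, and the skew/symmetry of $\delta{\bf A}/\delta{\bf S}$, I expect the families to decouple: the rotation mode gives $\left<\Delta\,\delta{\bf A}{\bf F}\right> = {\bf 0}$, a marginal zero mode reflecting the rotational invariance of the CMDS cost; the symmetric mode gives $\left<\Delta\,\delta{\bf S}{\bf F}\right> = -\tfrac2T\,\delta{\bf S}{\bf F}$, which always decays and restores orthonormality; and the out-of-subspace mode reduces, component by component, to
\[
\left<\Delta\,\delta B_{\alpha\beta}\right> = \tfrac1T\!\left(\tfrac{\lambda_\beta}{\lambda_\alpha}-1\right)\delta B_{\alpha\beta},
\]
with $\lambda_\alpha$ an in-subspace eigenvalue and $\lambda_\beta$ an out-of-subspace one.

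The conclusion is then immediate. Because $\lambda_\alpha>0$, the $(\alpha,\beta)$ mode decays if and only if $\lambda_\beta<\lambda_\alpha$, so linear stability forces $\lambda_\alpha>\lambda_\beta$ for every in-subspace index $\alpha$ and out-of-subspace index $\beta$. This holds precisely when the filter space is spanned by the eigenvectors of the $m$ largest eigenvalues of ${\bf C}$; any other selection contains a pair with $\lambda_\beta>\lambda_\alpha$ and is unstable.

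The step I expect to be the main obstacle is verifying that the three families genuinely decouple under the linearized map — that each of $\delta{\bf A}{\bf F}$, $\delta{\bf S}{\bf F}$, $\delta{\bf B}{\bf G}$ is returned to its own family rather than generating cross terms. This rests on the commutation structure of Lemma~\ref{lem2}/Theorem~\ref{Th2}, so that ${\bf C}$ leaves both the filter space and its complement invariant, together with the cancellations forced by ${\bf G}{\bf C}{\bf F}^\top = {\bf 0}$; without them the symmetric and out-of-subspace sectors would mix and the eigenvalue test would not separate. A secondary care is to interpret the vanishing growth rate of the $\delta{\bf A}$ direction as a genuine symmetry mode, so it is not mistaken for an instability.
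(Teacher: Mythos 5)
Your proposal is correct and follows essentially the same route as the paper's Appendix \ref{A5}: the same $\delta{\bf A}$, $\delta{\bf S}$, $\delta{\bf B}$ decomposition, the same linearized evolution from Lemma \ref{lem4} projected onto ${\bf F}^\top$ and ${\bf G}^\top$, and the same sector-by-sector conclusions, with your identification of the prefactor as $({\bf F}{\bf C}{\bf F}^\top)^{-1}$ being exactly the paper's ${\bf O}^{-1}={\bf F}{\bf C}{\bf F}^\top$ step applied earlier. The one point to tighten is the claim that Theorem \ref{Th2} lets you take the rows of ${\bf F}$ to be individual eigenvectors of ${\bf C}$ --- it only guarantees that the row space is ${\bf C}$-invariant --- but your argument survives because ${\bf F}{\bf C}=({\bf F}{\bf C}{\bf F}^\top){\bf F}$ and ${\bf G}{\bf C}{\bf F}^\top={\bf 0}$ already follow from that invariance, and the map $\delta{\bf B}\mapsto\frac 1T\left[({\bf F}{\bf C}{\bf F}^\top)^{-1}\delta{\bf B}\,({\bf G}{\bf C}{\bf G}^\top)-\delta{\bf B}\right]$ has eigenvalues $\frac 1T(\lambda_\beta/\lambda_\alpha-1)$ whether or not ${\bf F}{\bf C}{\bf F}^\top$ is diagonal, since its spectrum is $\{v^1,\ldots,v^m\}$ in any orthonormal basis of the filter space (which is how the paper handles this via its separate computation of ${\rm eig}({\bf O}^{-1})$).
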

\begin{proof}[Proof Sketch] Full proof is given in Appendix \ref{A5}. Here we sketch the proof.

To simplify our analysis, we choose a specific ${\bf G}$ in Lemma \ref{lem3} without losing generality. Let ${\bf v}^{1,\ldots,n}$ be eigenvectors of ${\bf C}$ and $v^{1,\ldots,n}$ be corresponding eigenvalues, labeled so that the first $m$ eigenvectors span the row space of ${\bf F}$ (or filter space). We choose rows of ${\bf G}$ to be the remaining eigenvectors, i.e. ${\bf G}':=[{\bf v}^{m+1},\ldots,{\bf v}^{n}]$.

By extracting the evolution of components of $\delta {\bf F}$ from \eqref{ddf} using \eqref{dF}, we are ready to state the conditions under which perturbations of ${\bf F}$ are stable. Mutlipying \eqref{ddf} on the right by ${\bf G}^\top$ gives the evolution of $\delta {\bf B}$:
\begin{align*}
 \left<\Delta \delta{B}_{i}^j\right> = \sum_k P^j_{ik}\delta B_{k}^j\quad \text{where}\quad P^j_{ik}\equiv \frac{1}{T}\left(\frac{\left({\bf I}_m+\bf{M}\right)^{-1}_{ik}}{\left<y^2_k\right>}v^{j+m} -\delta_{ik}\right). 
\end{align*}
Here we changed our notation to $\delta B_{kj}=\delta B_{k}^j$ to make it explicit that for each $j$ we have one matrix equation. These equations are stable when all eigenvalues of all ${\bf P}^j$ are negative, which requires as shown in the Appendix \ref{A5}:
\begin{align*}
\left\lbrace v^{1}, \ldots,v^{m}\right\rbrace >  \left\lbrace v^{m+1}, \ldots,v^{n}\right\rbrace.
\end{align*}
This result proves that the perturbation is stable only if the filter space is identical to the space spanned by eigenvectors corresponding to the $m$ highest eigenvalues of ${\bf C}$. 

It remains to analyze the stability of $\delta {\bf A}$ and $\delta {\bf S}$ perturbations. Multiplying \eqref{ddf} on the right by ${\bf F}^\top$ gives,
 \begin{align*}
\left<\Delta \delta{A}_{ij}\right> &= 0  \qquad {\rm and} \qquad \left<\Delta \delta{S}_{ij}\right> = \frac{-2}{T}\delta{S}_{ij}.
 \end{align*}
$\delta {\bf A}$ perturbation, which rotates neural filters, does not decay. This behavior is inherently related to the discussed symmetry of the strain cost function \eqref{strain} with respect to left rotations of the {\bf Y} matrix. Rotated {\bf y} vectors are obtained from the input by rotated neural filters and hence $\delta {\bf A}$ perturbation does not affect the cost. On the other hand, $\delta {\bf S}$ destroys orthonormality and these perturbations do decay, making the orthonormal solution stable. 
\end{proof}

To summarize our analysis, if the dynamics converges to a stationary state, neural filters form an orthonormal basis of the principal subspace. 

\section{Numerical simulations of the asynchronous network}\label{secNum}

Here, we simulate the performance of the network with asynchronous updates, \eqref{dynamics} and \eqref{hah}, on synthetic data. The data were generated by a colored Gaussian process with an arbitrarily chosen ``actual" covariance matrix. We choose the number of input channels, $n=64$, and the number of output channels, $m=4$. In the input data, the ratio of the power in first 4 principal components to the power in remaining 60 components was 0.54. 
 ${\bf W}$ and ${\bf M}$ were initialized randomly, the step size of synaptic updates were initialized to $1/D_{0,i}=0.1$. Coordinate descent step is cycled over neurons until magnitude of change in ${\bf y}_T$ in one cycle is less than $10^{-5}$ times the magnitude of ${\bf y}_T$. 

We compared the performance of the asynchronous updates network, \eqref{dynamics} and \eqref{hah}, with two previously proposed networks, APEX \citep{kung1990neural,kung1994adaptive} and F\"oldiak's  \citep{foldiak1989adaptive}, on the same dataset, Figure \ref{Fig2}. APEX network uses the same Hebbian/anti-Hebbian learning rules for synaptic weights, but the architecture is slightly different in that the lateral connection matrix, ${\bf M}$, is lower triangular. F\"oldiak's network has the same architecture as ours, Figure \ref{Fig1}B, and the same learning rules for feedforward connections. However,  the learning rule for lateral connections is $\Delta M_{ij}\propto y_i y_j$, unlike \eqref{hah}. For the sake of fairness, we applied the same adaptive step size procedure for all networks. As in \eqref{hah}, the stepsize for each neuron $i$ at time $T$ was $1/D_{T+1,i}$, with $D_{T+1,i} = D_{T,i} + y_{T,i}^2$.  In fact, such learning rate has been recommended and argued to be optimal for the APEX network \citep{diamantaras1996principal,diamantaras2002neural}, see also footnote \ref{StepSize}. 

To quantify the performance of these algorithms, we used three different metrics. First is the strain cost function, \eqref{strain}, normalized by $T^2$, Figure \ref{Fig2}A. Such normalization is chosen because the minimum value of offline strain cost is equal to the power contained in the eigenmodes beyond the top $m$: $T^2\sum_{k=m+1}^n\left(v^k\right)^2$, where $\lbrace v^1, \ldots,v^n\rbrace$ are eigenvalues of sample covariance matrix ${\bf C}_T$ \citep{cox2000multidimensional,mardia1980multivariate}. For each of the three networks, as expected, the  strain cost rapidly drops towards its lower bound. As our network was derived from the minimization of the strain cost function, it is not surprising that the cost drops faster than in the other two.

Second metric quantifies the deviation of the learned subspace from the actual principal subspace. At each $T$, the deviation is $\left\Vert{\bf F}_T^\top{\bf F}_T-{\bf V}^\top{\bf V}\right\Vert_F^2$, where ${\bf V}$ is a $m\times n$ matrix whose rows are the principal eigenvectors, ${\bf V}^\top{\bf V}$ is the projection matrix to the principal subspace,  ${\bf F}_T$ is defined the same way for APEX and F\"oldiak networks as ours and ${\bf F}_T^\top{\bf F}_T$ is the learned estimate of the projection matrix to the principal subspace. Such deviation rapidly falls for each network confirming that all three algorithms learn the principal subspace, Figure \ref{Fig2}B. Again, our algorithm extracts the principal subspace faster than the other two networks. 

Third metric measures the degree of non-orthonormality among the computed neural filters. At each $T$: $\left\Vert{\bf F}_T{\bf F}_T^\top-{\bf I}_m\right\Vert_F^2$. Non-orthonormality error quickly drops for all networks, confirming that neural filters converge to orthonormal vectors, Figure \ref{Fig2}C. Yet again, our network orthonormalizes neural filters much faster than the other two networks. 

\begin{figure}
\centering
\includegraphics[width=\textwidth]{./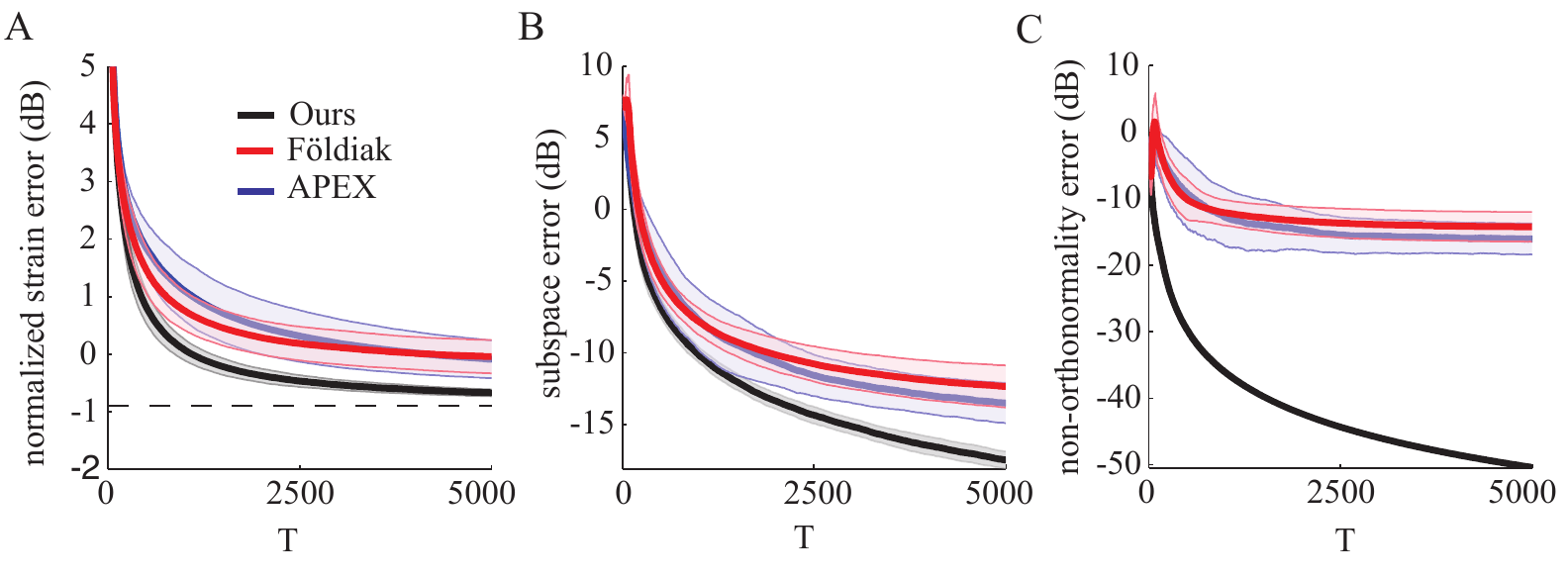}
  \caption{\label{Fig2}Performance of the asynchronous neural network compared with existing algorithms. Each algorithm was applied to 40 different random data sets drawn from the same Gaussian statistics, described in text. Weight initializations were random. Solid lines indicate means and shades indicate standard deviations across 40 runs.  All errors are in decibells (dB). For formal metric definitions, see text. {\bf A.} Strain error as a function of data presentations. Dotted line is the best error in batch setting, calculated using eigenvalues of the actual covariance matrix. {\bf B.} Subspace error as a function of data presentations. {\bf C.} Non-orthonormality error as a function of data presentations.}
\end{figure}

\section{Subspace tracking using a neural network with local learning rules}

We have demonstrated that our network learns a linear subspace of streaming data generated by a stationary distribution. But what if the data are generated by an evolving distribution and we need to track the corresponding linear subspace? Using the algorithm \eqref{hah} would be suboptimal because the learning rate is adjusted to effectively ``remember" the contribution of all the past data points.  

A natural way to track an evolving subspace is to ``forget" the contribution of older data points \citep{yang1995projection}. In this Section, we derive an algorithm with ``forgetting" from a principled cost function where errors in the similarity of old data points are discounted: 
\begin{align}\label{discount0}
{{\bf y}_T}= \mathop {\arg \min }\limits_{{\bf y}_T} \sum_{t=1}^T\sum_{t'=1}^{T}\beta^{2T-t-t'}\left({\bf x}_t^\top{\bf x}_{t'}-{\bf y}_t^\top{\bf y}_{t'}\right)^2. 
\end{align}
where $\beta$ is a discounting factor $0\leq\beta\leq1$ with $\beta=1$ corresponding to our original algorithm \eqref{Frob}. The effective time scale of ``forgetting" is:
\begin{align}
\tau := -1/\ln \beta.
\end{align} 

By introducing a $T \times T$-dimensional diagonal matrix $\beta_T$ with diagonal elements $\beta_{T,{ii}} = \beta^{T-i}$ we can rewrite \eqref{discount0} in a matrix notation:  
\begin{align}\label{discount1}
{{\bf y}_T} &=  \mathop {\arg \min }\limits_{{\bf y}_T} \left\Vert \beta_T^\top{\bf X}^\top{\bf X}\beta_T-\beta_T^\top{\bf Y}^\top{\bf Y}\beta_T\right\Vert_F^2.
\end{align}
A similar discounting was used in  \citep{yang1995projection} to derive subspace tracking algorithms from the representation error cost function, \eqref{PCA}.

To derive an online algorithm to solve \eqref{discount1} we follow the same steps as before. By keeping only the terms that depend on current output ${\bf y}_T$ we get:
\begin{multline} \label{onlineFullDisc}
{{\bf y}_T}  =\mathop {\arg \min }\limits_{{\bf y}_T} \left[  - 4{{\bf x}^\top_T}\left( {\sum\limits_{t = 1}^{T - 1} \beta^{2(T-t)}{{\bf x}_t{{\bf y}^\top_t}} } \right){\bf y}_T + 2{{\bf y}^\top_T}\left( {\sum\limits_{t = 1}^{T - 1}\beta^{2(T-t)} {{\bf y}_t{{\bf y}^\top_t}} } \right){\bf y}_T  \right. \\ \left.- 2{{\left\| {{\bf x}_T} \right\|}^2}{{\left\| {{\bf y}_T} \right\|}^2} + {{\left\| {{\bf y}_T} \right\|}^4} \right].
\end{multline}
In \eqref{onlineFullDisc}, provided that past input-input and input-output outer products are not forgotten for a sufficiently long time, i.e. $\tau>>1$, the first two terms dominate over the last two for large $T$. After dropping the last two terms we arrive at: 
\begin{align}\label{onlineStrainDisc}
{{\bf y}_T} =\mathop {\arg \min }\limits_{{\bf y}_T} \left[ { - 4{{\bf x}^\top_T}\left( {\sum\limits_{t = 1}^{T - 1}\beta^{2(T-t)} {{\bf x}_t{{\bf y}^\top_t}} } \right){\bf y}_T + 2{{\bf y}^\top_T}\left( {\sum\limits_{t = 1}^{T - 1} \beta^{2(T-t)} {{\bf y}_t{{\bf y}^\top_t}} } \right){\bf y}_T  } \right].
\end{align}

As in the non-discounted case, minimization of the discounted online CMDS cost function by coordinate descent \eqref{onlineStrainDisc} leads to a neural network with asynchronous updates,
\begin{align}\label{asynch}
{y_{T,i}} \leftarrow \sum_{j=1}^n W^\beta_{T,ij} x_{T,j} - \sum_{j=1}^m M^\beta_{T,ij} y_{T,j}, 
\end{align}
and by a Jacobi iteration - to a neural network with synchronous updates,
\begin{align}\label{synch}
{\bf y}_T \leftarrow {\bf W}^\beta_T{\bf x}_T-{\bf M}^\beta_T{\bf y}_T,
\end{align}
with synaptic weight matrices in both cases given by:
\begin{align}\label{WMdisc}
W_{T,ij}^\beta = \frac{{\sum\limits_{t = 1}^{T - 1}\beta^{2(T-t)} {y_{t,i}^{}x_{t,j}^{}} }}{{\sum\limits_{t = 1}^{T - 1} \beta^{2(T-t)}{y_{t,i}^2} }},\qquad  M_{T,i,j \ne i}^\beta= \frac{{\sum\limits_{t = 1}^{T - 1} \beta^{2(T-t)}{y_{t,i}^{}y_{t,j}^{}} }}{{\sum\limits_{t = 1}^{T - 1} \beta^{2(T-t)} {y_{t,i}^2} }},\qquad M_{T,ii}^\beta = 0.
\end{align}

Finally, we rewrite \eqref{WMdisc} in a recursive form. As before, we introduce a scalar variable ${D}_{T,i}^\beta$ representing the discounted cumulative activity of a neuron $i$ up to time $T-1$, 
\begin{align}\label{YDisc}
{D}_{T,i}^\beta =\sum\limits_{t = 1}^{T - 1} \beta^{2(T-t-1)} {y_{t,i}^2}.
\end{align}
Then, the recursive updates are:
\begin{align}\label{hahDisc}
D_{T+1,i}^\beta &\leftarrow \beta^2D_{T,i}^\beta+ y_{T,i}^2\nonumber\\
W_{T+1,ij}^\beta &\leftarrow W_{T ,ij}^\beta + y_{T,i}\left( {x_{T,j} - W_{T,ij}^\beta y_{T,i}} \right)/D_{T+1,i}^\beta \nonumber\\
M_{T+1,i,j \ne i}^\beta &\leftarrow M_{T,ij}^\beta + y_{T,i}\left( {y_{T,j} - M_{T,ij}^\beta y_{T,i}^{}} \right)/D_{T+1,i}^\beta.
\end{align}
These updates are local and almost identical to the original updates \eqref{hah} except the ${D}_{T+1,i}^\beta$ update, where the past cumulative activity is discounted by $\beta^2$. For suitably chosen $\beta$, the learning rate, $1/{D}_{T+1,i}^\beta$, stays sufficiently large even at large-$T$, allowing the algorithm to react to changes in data statistics. 

As before, we have a two-phase algorithm for minimizing the discounted online CMDS cost function \eqref{onlineStrainDisc}. For each data presentation, first the neural network dynamics is run using either \eqref{asynch} or \eqref{synch} until the dynamics converges to a fixed point. In the second step, synaptic weights are updated using \eqref{hahDisc}.

In Figure \ref{Fig4}, we present the results of a numerical simulation of our subspace tracking algorithm with asynchronous updates  similar to that in Section \ref{secNum} but for nonstationary synthetic data. The data are drawn from two different Gaussian distributions: from $T=1$ to $T=2500$ - with covariance ${\bf C}_1$,  and from $T=2501$ to $T=5000$ - with covariance ${\bf C}_2$. We ran our algorithm with 4 different $\beta$ factors,  $\beta=0.998, 0.995, 0.99, 0.98$ ($\tau = 499.5, 199.5, 99.5, 49.5$).

We evaluate the subspace tracking performance of the algorithm using a modification of the subspace error metric introduced in Section \ref{secNum}. From $T=1$ to $T=2500$ the error is $\left\Vert{\bf F}_T^\top{\bf F}_T-{\bf V}_1^\top{\bf V}_1\right\Vert_F^2$, where ${\bf V}_1$ is a $m\times n$ matrix whose rows are the principal eigenvectors of ${\bf C}_1$. From $T=2501$ to $T=5000$ the error is $\left\Vert{\bf F}_T^\top{\bf F}_T-{\bf V}_2^\top{\bf V}_2\right\Vert_F^2$, where ${\bf V}_2$ is a $m\times n$ matrix whose rows are the principal eigenvectors of ${\bf C}_2$. Figure \ref{Fig4}A plots this modified subspace error.  Initially, the subspace error decreases, reaching lower values with higher $\beta$. Higher $\beta$ allows for smaller learning rates allowing a fine-tuning of the neural filters and hence lower error. At $T=2501$, a sudden jump is observed corresponding to the change in principal subspace. The network rapidly corrects its neural filters to project to the new principal subspace and the error falls to before jump values. It is interesting to note that higher $\beta$ now leads to a slower decay due to extended memory in the past. 

We also quantify the degree of non-orthonormality of neural filters using the non-orthonormality error defined in Section \ref{secNum}. Initially, the non-orthonormality error decreases, reaching lower values with higher $\beta$. Again, higher $\beta$ allows for smaller learning rates allowing a fine-tuning of the neural filters. At $T=2501$, an increase in orthonormality error is observed as the network is adjusting its neural filters. Then, the error falls to before change values, with higher $\beta$ leading to a slower decay due to extended memory in the past.

\begin{figure}
\centering
\includegraphics{./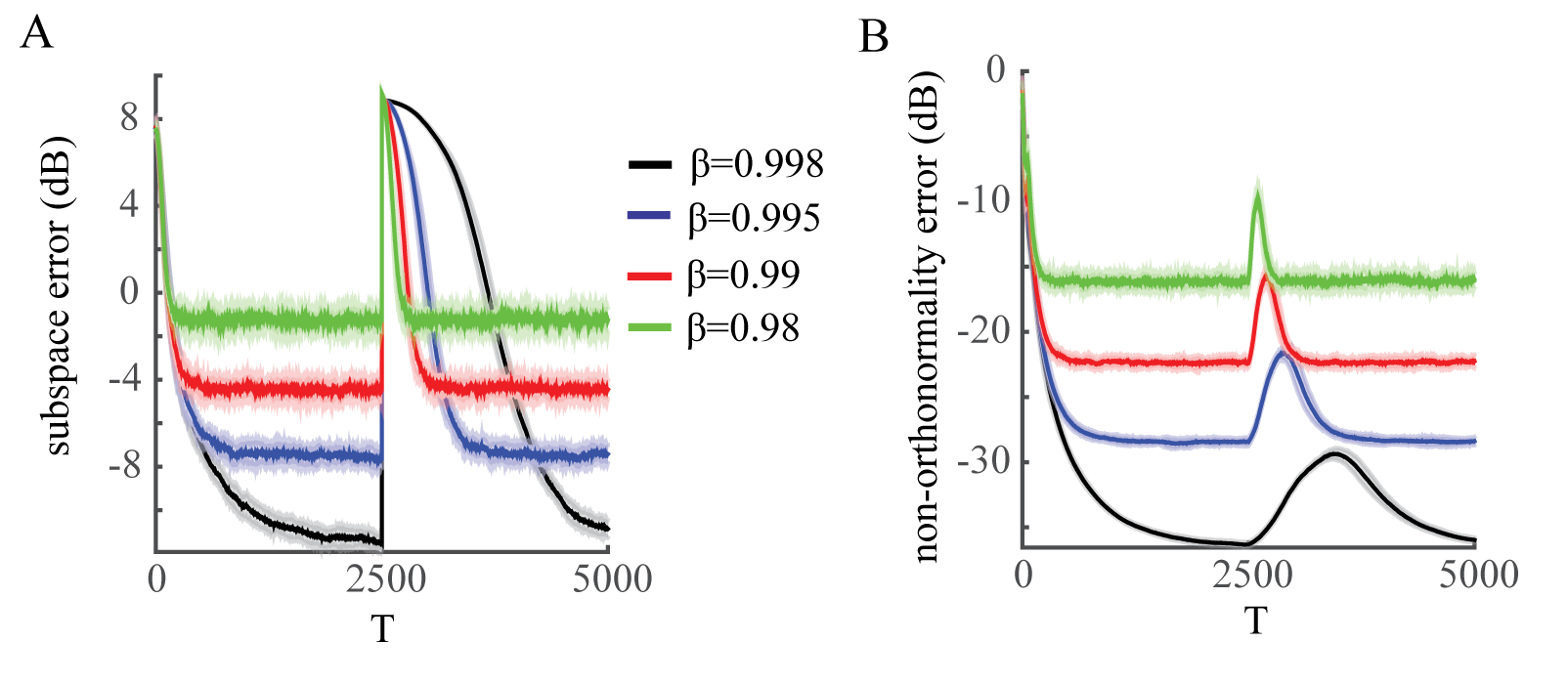}
\caption{\label{Fig4} Performance of the subspace tracking asynchronous neural network with nonstationary data. The algorithm with different $\beta$ factors was applied to 40 different random data sets drawn from the same nonstationary statistics, described in text. Weight initializations were random. Solid lines indicate means and shades indicate standard deviations.  All errors are in decibells (dB). For formal metric definitions, see text.  {\bf A.} Subspace error as a function of data presentations. {\bf B.} Non-orthonormality error as a function of data presentations.}
\end{figure}

\section{Discussion}

In this paper, we made a step towards a mathematically rigorous model of neuronal dimensionality reduction satisfying more biological constraints than was previously possible. By starting with the CMDS cost function \eqref{strain}, we derived a single-layer neural network of linear units using only local learning rules. Using a local stability analysis, we showed that our algorithm finds a set of orthonormal neural filters and projects the input data stream to its principal subspace. We showed that with a small modification in learning rate updates, the same algorithm performs subspace tracking. 

Our algorithm finds the principal subspace, but not necessarily the principal components themselves. This is not a weakness since both the representation error cost \eqref{PCA} and CMDS cost \eqref{strain} are minimized by projections to principal subspace and finding the principal components is not necessary. 

Our network is most similar to F\"oldiak's network \citep{foldiak1989adaptive}, which learns feedforward weights by a Hebbian Oja rule and the all-to-all lateral weights by an anti-Hebbian rule. Yet, the functional form of the anti-Hebbian learning rule in  F\"oldiak's network, $\Delta M_{ij}\propto y_i y_j$, is different from ours \eqref{hah} resulting in the following interesting differences: 1) Because the synaptic weight update rules in F\"oldiak's network are symmetric, if the weights are initialized symmetric, i.e. $M_{ij}=M_{ji}$, and learning rates are identical for lateral weights, they will stay symmetric. As mentioned above, such symmetry does not exist in our network (\eqref{hah} and \eqref{main2}). 2) While in F\"oldiak's network neural filters need not be orthonormal \citep{foldiak1989adaptive,leen1991}, in our network they will be (Theorem \ref{Th1}). 3)  In F\"oldiak's network output units are decorrelated \citep{foldiak1989adaptive}, since in its stationary state $\left<y_i y_j\right>=0$. This need not be true in our network. Yet, correlations among output units do not necessarily mean that information in the output about the input is reduced\footnote{As pointed before \citep{linsker1988,plumbley1993,plumbley1995,kung2014kernel}, PCA maximizes mutual information between a Gaussian input, ${\bf x}$, and an output, ${\bf y}= {\bf F}{\bf x}$, such that rows of ${\bf F}$ have unit norms. When rows of ${\bf F}$ are principal eigenvectors, outputs are principal components and are uncorrelated. However, the output can be multiplied by a rotation matrix, ${\bf Q}$, and mutual information is unchanged, ${\bf y}' ={\bf Q} {\bf y}= {\bf Q}{\bf F}{\bf x}$. ${\bf y}'$ is now a correlated Gaussian and $ {\bf Q}{\bf F}$ still has rows with unit norms. Therefore, one can have correlated outputs with maximal mutual information between input and output, as long as rows of ${\bf F}$ span the principal subspace.}.

Our network is similar to the APEX network \citep{kung1990neural} in the functional form of both the feedforward and the lateral weights. However the network architecture is different because the APEX network has a lower-triangular lateral connectivity matrix. Such difference in architecture leads to two interesting differences in the APEX network operation  \citep{diamantaras1996principal}: 1) The outputs converge to the principal components. 2) Lateral weights decay to zero and neural filters are the feedforward weights. In our network lateral weights do not have to decay to zero and neural filters depend on both the feedforward and lateral weights \eqref{QF}.

In numerical simulations, we observed that our network is faster than F\"oldiak's and  APEX networks in minimizing the strain error, finding the principal subspace and orthonormalizing neural filters. This result demonstrates the advantage of our principled approach compared to heuristic learning rules.

Our choice of coordinate descent to minimize the cost function in the activity dynamics phase allowed us to circumvent problems associated with matrix inversion: ${\bf y} \leftarrow ({\bf I}_m+{\bf M})^{-1}{\bf W}{\bf x}$. Matrix inversion causes problems for neural network implementations because it is a non-local operation.  In the absence of a cost function, F\"oldiak suggested to implement matrix inversion by iterating ${\bf y}\leftarrow {\bf W}{\bf x} - {\bf M}{\bf y}$ until convergence \citep{foldiak1989adaptive}. We derived a similar algorithm using Jacobi iteration. However, in general, such iterative schemes are not guaranteed to converge \citep{hornik1992convergence}. Our coordinate descent algorithm is almost always guaranteed to converge because the cost function in the activity dynamics phase \eqref{onlineStrain} meets the criteria in \citep{luo1991on}.  

Unfortunately, our treatment still suffers from the problem common to most other biologically plausible neural networks \citep{hornik1992convergence}: a complete global convergence analysis of synaptic weights is not yet available. Our stability analysis is local in the sense that it starts by assuming that the synaptic weight dynamics has reached a stationary state and then proves that perturbations around the stationary state are stable. We have not made a theoretical statement on whether this state can ever be reached or how fast such a state can be reached. Global convergence results using stochastic approximation theory  are available for the single-neuron Oja rule \citep{oja1985stochastic}, its nonlocal generalizations \citep{plumbley1995} and the APEX rule \citep{diamantaras1996principal}, however applicability of stochastic approximation theory was questioned recently \citep{zufiria2002}. Even though a neural network implementation is unknown, Warmuth \& Kuzmin's  online PCA algorithm stands out as the only algorithm for which a regret bound has been proved \citep{warmuth2008randomized}. An asymptotic dependence of regret on time can also be interpreted as convergence speed. 

This paper also contributes to MDS literature by applying CMDS method to streaming data. However, our method has limitations in that to derive neural algorithms we used the strain cost \eqref{strain} of CMDS. Such cost is formulated in terms of similarities, inner products to be exact, between pairs of data vectors and allowed us to consider a streaming setting where a data vector is revealed at a time. In the most general formulation of MDS pairwise dissimilarities between data instances are given rather than data vectors themselves or similarities between them \citep{cox2000multidimensional,mardia1980multivariate}. This generates two immediate problems for a generalization of our approach: 1) A mapping to the strain cost function \eqref{strain} is only possible if the dissimilarites are Euclidean distances (footnote \ref{dissimilarity}). In general, dissimilarities do not need be Euclidean or even metric distances \citep{cox2000multidimensional,mardia1980multivariate} and one cannot start from the strain cost \eqref{strain} for derivation of a neural algorithm. 2) In the streaming version of the general MDS setting, at each step, dissimilarities between the current and all past data instances are revealed, unlike our approach where the data vector itself is revealed. It is a challenging problem for future studies to find neural implementations in such generalized setting.

The online CMDS cost functions \eqref{onlineStrain} and \eqref{onlineStrainDisc} should be valuable for subspace learning and tracking applications where biological plausibility is not a necessity. Minimization of such cost functions could be performed much more efficiently in the absence of constraints imposed by biology\footnote{For example, matrix equation \eqref{anMin} could be solved by a conjugate gradient descent method instead of iterative methods. Matrices that keep input-input and output-output correlations in \eqref{anMin} can be calculated recursively, leading to a truly online method}. It remains to be seen how the algorithms presented in this paper and their generalizations compare to state-of-the-art online subspace tracking algorithms from machine learning literature \citep{cichocki2002adaptive}.

Finally, we believe that formulating the cost function in terms of similarities supports the possibility of representation invariant computations in neural networks.

\subsubsection*{Acknowledgments}

We are grateful to L. Greengard, S. Seung and M. Warmuth for helpful discussions.

\appendix
\section{Appendix}

\subsection{Alternative derivation of an asynchronous network}\label{altAsynch}

Here, we solve the system of equations \eqref{anMin}  iteratively \citep{strang2009}. First, we split the output covariance matrix that appears on the left-hand side of \eqref{anMin} into its diagonal component ${\bf D}_T$, a strictly upper triangular matrix ${\bf U}_T$ and a strictly lower triangular matrix ${\bf L}_T$:
\begin{align}\label{DUL}
{\sum\limits_{t = 1}^{T - 1} {{\bf y}_t{{\bf y}^\top_t}} } = {\bf D}_T + {\bf U}_T + {\bf L}_T.
\end{align}
Substituting this into \eqref{anMin} we get:
\begin{align}\label{master}
\left({\bf D}_T + \omega {\bf L}_T\right){\bf y}_T  = \left(\left(1-\omega\right) {\bf D}_T -\omega {\bf U}_T \right) {\bf y}_T+ \omega \left( {\sum\limits_{t = 1}^{T - 1} {{\bf y}_t{{\bf x}^\top_t}} } \right){{\bf x}_T} ,
\end{align}
where $\omega$ is a parameter. We solve \eqref{anMin} by iterating
\begin{align}\label{masterIter}
{\bf y}_T  \longleftarrow \left({\bf D}_T + \omega {\bf L}_T\right)^{-1}\left[\left(\left(1-\omega\right) {\bf D}_T -\omega {\bf U}_T \right) {\bf y}_T+ \omega \left( {\sum\limits_{t = 1}^{T - 1} {{\bf y}_t{{\bf x}^\top_t}} } \right){{\bf x}_T}\right] ,
\end{align}
until convergence.  If symmetric ${\sum\limits_{t = 1}^{T - 1} {{\bf y}_t{{\bf y}^\top_t}} }$ is positive definite, the convergence is guaranteed for $0<\omega<2$ by the Ostrowski-Reich theorem \citep{reich1949,ostrowski1954}. When $\omega=1$ the iteration \eqref{masterIter} corresponds to the Gauss-Seidel method and when $\omega>1$ - to the succesive overrelaxation method. The choice of $\omega$ for fastest convergence depends on the problem, and we will not explore this question here. However, values around 1.9 are generally recommended \citep{strang2009}.

Because in \eqref{master} the matrix multiplying ${\bf y}_T$ on the left is lower triangular and on the right is upper triangular, the iteration \eqref{masterIter} can be performed component-by-component \citep{strang2009}:
\begin{align}
\label{solution1Alt}
{y_{T,i}} \longleftarrow \left(1-\omega \right){y_{T,i}}  +\omega \frac{{\sum\limits_k {\left( {\sum\limits_{t = 1}^{T - 1} {y_{t,i}^{}x_{t,k}^{}} } \right)x_{T,k}^{}} }}{{\sum\limits_{t = 1}^{T - 1} {y_{t,i}^2} }} - \omega \frac{{\sum\limits_{j \ne i} {\left( {\sum\limits_{t = 1}^{T - 1} {y_{t,i}^{}y_{t,j}^{}} } \right)y_{T,j}^{}} }}{{\sum\limits_{t = 1}^{T - 1} {y_{t,i}^2} }}.
\end{align}
Note that $y_{T,i}$ is replaced with its new value before moving to the next component.

This algorithm can be implemented in a neural network 
\begin{align}\label{dynamicsAlt}
{y_{T,i}} \leftarrow  \left(1-\omega \right){y_{T,i}} +\omega \sum_{j=1}^n W_{T,ij} x_{T,j} - \omega \sum_{j=1}^m M_{T,ij} y_{T,j}, 
\end{align}
where ${\bf W}_T$ and ${\bf M}_T$, as defined in \eqref{WM},  represent the synaptic weights of feedforward and lateral connections respectively. The case of $\omega < 1$ can be implemented by a leaky integrator neuron. The $\omega = 1$ case corresponds to our original asynchronous algorithm, except that now updates are performed in a particular order. For the $\omega>1$ case, which may converge faster, we do not see a biologically plausible implementation since it requires self-inhibition.

Finally, to express the algorithm in a fully online form we rewrite \eqref{WM} via recursive updates, resulting in \eqref{hah}.

\subsection{Proof of Lemma \ref{lem2}}\label{A3}

\begin{proof}[Proof of Lemma 2] In our derivation below, we use results from equations \eqref{QF}, \eqref{main1} and \eqref{main2} of the main text.
\begin{align*}
\left({\bf F}^\top{\bf F} {\bf C}\right)_{ij} &= \sum_{kl} F_{ki} F_{kl} \left<x_lx_j\right> \nonumber \\
&=  \sum_{k} F_{ki}  \left<y_kx_j\right> {\tag{from \eqref{QF}}}\nonumber \\	
&=  \sum_{k} F_{ki}  \left<y_k^2\right>W_{kj} &{\tag{from \eqref{main1}}}\\
&=\sum_{kp} F_{ki}  \left<y_k^2\right>\left(M_{kp}+\delta_{kp}\right)F_{pj} &{\tag{from \eqref{QF}}}\nonumber \\
&=\sum_{kp} F_{ki}  \left<y_p^2\right>\left(M_{pk}+\delta_{pk}\right)F_{pj} &\tag{from \eqref{main2}}\\
&=\sum_{p} W_{pi}  \left<y_p^2\right>F_{pj} &\tag{from \eqref{QF}} \\
&=\sum_{p} \left<y_px_i\right>F_{pj} &\tag{from \eqref{main1}} \\
&=\sum_{pk} F_{pk}\left<x_kx_i\right>F_{pj} = \sum_{pk} \left<x_ix_k\right>F_{pk}F_{pj} =\left({\bf C}{\bf F}^\top{\bf F} \right)_{ij}. &\tag{from \eqref{QF}}
\end{align*}
\end{proof}

\subsection{Proof of Lemma \ref{lem4}}\label{A4}

Here we calculate how $\delta {\bf F}$ evolves under the learning rule, i.e. $\left<\Delta \delta{\bf F}\right>$ and derive equation \eqref{ddf}. 

First, we introduce some new notation to simplify our expressions. We define lateral synaptic weight matrix ${\bf M}$ with diagonals set to 1 as 
\begin{align}\label{Mhat}
{\bf \hat M} := {\bf I}_m + {\bf M}.
\end{align}
We use $\,\tilde{}\,$ to denote perturbed matrices
\begin{align}\label{PM}
\tilde {\bf F} &:= {\bf F} + \delta{\bf F}, & \tilde {\bf W} &:= {\bf W} + \delta{\bf W}, \nonumber \\  \tilde {\bf M} &:= {\bf M} + \delta{\bf M},  &\hat {\tilde {\bf  M}}&:= {\bf I} + \tilde {\bf M}= \hat {\bf M} + \delta{\bf M}.
\end{align}
Note that when the network is run with these perturbed synaptic matrices, for input ${\bf x}$, the network dynamics will settle to the fixed point
\begin{align}\label{Py}
\tilde{\bf y} = \hat {\tilde {\bf  M}}^{-1}\tilde {\bf W} {\bf x} = \tilde {\bf F} {\bf x},
\end{align}
which is different from the fixed point of the stationary network, ${\bf y} = \hat {\bf M}^{-1} {\bf W} {\bf x} =  {\bf F} {\bf x}$.

Now we can prove Lemma \ref{lem4}.
\begin{proof}[Proof of Lemma \ref{lem4}]
The proof goes in the following steps.
\begin{enumerate}[leftmargin=*]
\item Since our update rules are formulated in terms of ${\bf W}$ and ${\bf M}$, it will be helpful to express $\delta {\bf F}$ in terms of $\delta {\bf W}$ and $\delta {\bf M}$. The definition of ${\bf F}$, equation \eqref{QF}, gives us the desired relation:
\begin{align} \label{smallDelta}
(\delta {\bf \hat M}) {\bf F} + {\bf \hat M} (\delta {\bf F}) = \delta {\bf W}.
\end{align}
\item Next, we show that in the stationary state
\begin{align}\label{def}
\left<\Delta \delta {\bf F}\right> &= {\bf \hat M}^{-1}\left(\left<\Delta \delta {\bf W}\right>-\left<\Delta \delta {\bf M}\right>{\bf F}\right)+\mathcal{O}\left(\frac 1{T^2}\right).
\end{align}
\begin{proof}Average changes due to synaptic updates on both sides of \eqref{smallDelta} are equal: $\left<\Delta\left[(\delta {\bf \hat M}) {\bf F} + {\bf \hat M} (\delta {\bf F})\right]\right>  = \left<\Delta\delta {\bf W}\right> $. Noting that the unperturbed matrices are stationary, i.e. $\left<\Delta{\bf   M}\right> = \left<\Delta{\bf   F}\right> = \left<\Delta{\bf   W}\right>=0$, one gets $ \left<\Delta\delta{\bf   M}\right>{\bf F} + {\bf \hat M}\left<\Delta\delta{\bf F}\right> = \left<\Delta \delta{\bf W}\right>+\mathcal{O}\left({T^{-2}}\right)$, from which equation \eqref{def} follows.\end{proof}
\item Next step is to calculate $\left<\Delta \delta{\bf W}\right>$ and $\left<\Delta \delta{\bf M}\right>$ using the learning rule, in terms of matrices ${\bf W}$, ${\bf M}$, ${\bf C}$, ${\bf F}$ and $\delta{\bf F}$ and plug the result into \eqref{def}. This manipulation is going to give us the evolution of $\delta {\bf F}$ equation, \eqref{ddf}.

First, $\left<\Delta \delta{\bf W}\right>$ :
 \begin{align*}
 \left<\Delta \delta{W}_{ij}\right> &= \left<\Delta \tilde {W}_{ij}\right> \\
 &=\frac{1}{T\left<y^2_i\right>} \left(\left<\tilde y_ix_j\right>-\left<\tilde{y}_i^2\right>\tilde W_{ij}\right) \\
 &=\frac{1}{T\left<y^2_i\right>}\left(\sum_{k}\tilde F_{ik}\left<x_kx_j\right>-\sum_{kl}\tilde F_{ik}\tilde F_{il}\left<x_kx_l\right>\tilde W_{ij}\right) \tag{from \eqref{Py}}\\
 &=\frac{1}{T\left<y^2_i\right>}\left(\sum_{k}\tilde F_{ik}C_{kj}-\sum_{kl}\tilde F_{ik}\tilde F_{il}C_{kl}\tilde W_{ij}\right) \\
 &=\frac{1}{T\left<y^2_i\right>}\left(\sum_{k}F_{ik}C_{kj}-\sum_{kl}F_{ik}F_{il}C_{kl}W_{ij} +\sum_{k}\delta F_{ik}C_{kj}\right. \nonumber\\
 &\qquad \qquad \left.-2\sum_{kl}\delta F_{ik}F_{il}C_{kl}W_{ij}-\sum_{kl} F_{ik}F_{il}C_{kl}\delta W_{ij}  \right) \tag{from \eqref{PM}}\\
 &=\frac{1}{T\left<y^2_i\right>}\left( \sum_{k}\delta F_{ik}C_{kj}-2\sum_{kl}\delta F_{ik}F_{il}C_{kl}W_{ij}\right. \nonumber \\ &\qquad\qquad \left.-\sum_{kl} F_{ik}F_{il}C_{kl}\delta W_{ij}  \right).\tag{from \eqref{main1}} 
 \end{align*}
Next we calculate $\left<\Delta \delta{\bf M}\right>$ :
 \begin{align*}
 \left<\Delta \delta{M}_{ij}\right> &= \left<\Delta \tilde {\hat M}_{ij}\right> \\
 &= \frac{1}{T\left<y^2_i\right>}\left(\left<\tilde y_i \tilde y_j\right>-\left<{\tilde y}_i^2\right>\tilde M_{ij}\right)- \frac 1{D_i}\delta_{ij}\left<{\tilde y}_i^2\right> \tag{last term sets $\Delta { \tilde {\hat M}}_{ii}=0$}\nonumber \\
 &=\frac{1}{T\left<y^2_i\right>}\left(\sum_{kl}\tilde F_{ik} \tilde F_{jl}\left<x_kx_l\right>-\sum_{kl}\tilde F_{ik}\tilde F_{il}\left<x_kx_l\right>\tilde M_{ij}\right.\nonumber \\ &\qquad\qquad \left.-\delta_{ij}\sum_{kl}\tilde F_{ik}\tilde F_{il}\left<x_kx_l\right>\right) \tag{from \eqref{Py}} \\
  &=\frac{1}{T\left<y^2_i\right>}\left(\sum_{kl}\tilde F_{ik}\tilde F_{jl}C_{kl}-\sum_{kl}\tilde F_{ik}\tilde F_{il}C_{kl}\tilde M_{ij}-\delta_{ij}\sum_{kl}\tilde F_{ik} \tilde F_{il}C_{kl}\right) \\
 &=\frac{1}{T\left<y^2_i\right>}\left(\sum_{kl}F_{ik}F_{jl}C_{kl}-\sum_{kl}F_{ik}F_{il}C_{kl}M_{ij}-\delta_{ij}\sum_{kl}F_{ik}F_{il}C_{kl} \right. \nonumber\\
 &\qquad \qquad  +\sum_{kl}\delta F_{ik}F_{jl}C_{kl}+\sum_{kl}F_{ik}\delta F_{jl}C_{kl} -2\sum_{kl}\delta F_{ik}F_{il}C_{kl}M_{ij}  \nonumber \\
 &\qquad\qquad  \left.-\sum_{kl}F_{ik}F_{il}C_{kl}\delta M_{ij}-2\delta_{ij}\sum_{kl}\delta F_{ik}F_{il}C_{kl}\right) \tag{from \eqref{PM}}\\
 &=\frac{1}{T\left<y^2_i\right>}\left(\sum_{kl}\delta F_{ik}F_{jl}C_{kl}+\sum_{kl}F_{ik}\delta F_{jl}C_{kl} -2\sum_{kl}\delta F_{ik}F_{il}C_{kl}M_{ij}  \right. \nonumber \\
 &\qquad\qquad  \left. -\sum_{kl}F_{ik}F_{il}C_{kl}\delta M_{ij}-2\delta_{ij}\sum_{kl}\delta F_{ik}F_{il}C_{kl}\right). \tag{from \eqref{main2}}
 \end{align*}

Plugging these in equation \eqref{def}, we get
\begin{align*}
\left<\Delta \delta{ F}_{ij}\right> &= \sum_k \frac {\hat M^{-1}_{ik}}{T\left<y^2_k\right>}\left[ \sum_l \delta F_{kl}C_{lj}-2\sum_{lp}\delta F_{kl}F_{kp}C_{lp}W_{kj}-\sum_{lp}F_{kl}F_{kp}C_{lp}\delta W_{kj}\right. \nonumber\\
&\qquad\qquad\qquad - \sum_{lpr}\delta F_{kl}F_{rp}C_{lp}F_{rj} - \sum_{lpr} F_{kl}\delta F_{rp}C_{lp}F_{rj} \nonumber\\
&\qquad\qquad\qquad  +2\sum_{lpr} \delta F_{kl}F_{kp}C_{lp}M_{kr}F_{rj} +\sum_{lpr} F_{kl}F_{kp}C_{lp}\delta M_{kr}F_{rj}\nonumber\\
&\qquad\qquad\qquad \left.+2\sum_{lpr} \delta_{kr} \delta F_{kl}F_{kp}C_{lp} F_{rj}\right]+\mathcal{O}\left(\frac 1{T^2}\right).
\end{align*}
$M_{kr}$ and $\delta M_{kr}$ terms can be eliminated using the previously derived relations \eqref{QF} and \eqref{smallDelta}. This leads to a cancellation of some of the terms given above, and finally we have
\begin{multline*}
\left<\Delta \delta{F}_{ij}\right> = \sum_k \frac {\hat M^{-1}_{ik}}{T\left<y^2_k\right>} \left[ \sum_l \delta F_{kl}C_{lj}-\sum_{lpr}\delta F_{kl}F_{rp}C_{lp}F_{rj} \right. \\
\left.- \sum_{lpr} F_{kl}\delta F_{rp}C_{lp}F_{rj}- \sum_{lpr} F_{kl}F_{kp}C_{lp}\hat M_{kr}\delta F_{rj}\right]+\mathcal{O}\left(\frac 1{T^2}\right).
\end{multline*}
To proceed further, we note that:
\begin{align}\label{Qinv2}
{\left<y^2_k\right>} = \left({\bf F}{\bf C}{\bf F}^\top\right)_{kk},
\end{align}
which allows us to simplify the last term. Then, we get our final result:
\begin{multline*}
\left<\Delta \delta{F}_{ij}\right> = \frac{1}{T}\sum_k \frac{\hat M^{-1}_{ik}}{ \left<y^2_k\right>}\left[ \sum_l \delta F_{kl}C_{lj}-\sum_{lpr}\delta F_{kl}F_{rp}C_{lp}F_{rj} \right.\\
\left.- \sum_{lpr} F_{kl}\delta F_{rp}C_{lp}F_{rj}\right]-\frac 1T \delta F_{ij}+\mathcal{O}\left(\frac 1{T^2}\right). 
\end{multline*}
\end{enumerate}
\end{proof}
 
 \subsection{Proof of Theorem \ref{Th3}}\label{A5}

For ease of reference, we remind that in general $\delta{\bf F}$ can be written as,
\begin{align*}\tag{\ref{dF}}
\delta{\bf F} = \delta {\bf A} \,{\bf F} + \delta {\bf S}\, {\bf F}+\delta {\bf B}\, {\bf G}.
\end{align*}
Here, $\delta{\bf A}$ is an $m\times m$ skew symmetric matrix, $\delta{\bf S}$ is an $m\times m$ symmetric matrix and $\delta{\bf B}$ is an $m\times(n-m)$ matrix. ${\bf G}$ is an $(n-m) \times n$ matrix with orthonormal rows. These rows are chosen to be orthogonal to the rows of ${\bf F}$. Let ${\bf v}^{1,\ldots,n}$ be the eigenvectors ${\bf C}$ and $v^{1,\ldots,n}$ be the corresponding eigenvalues. We label them such that  ${\bf F}$ spans the same space as the space spanned by the first $m$ eigenvectors. We choose rows of ${\bf G}$ to be the remaining eigenvectors, i.e. ${\bf G}^\top:=[{\bf v}^{m+1},\ldots,{\bf v}^{n}]$. Then, for future reference,
\begin{align}\label{G}
{\bf F}{\bf G}^\top = 0, \quad {\bf G}{\bf G}^\top = {\bf I}_{(n-m)},\quad\text{and} \quad  \sum_k C_{ik}G^\top_{kj} = \sum_k C_{ik}v^{j+m}_k = v^{j+m} G^\top_{ij}.
\end{align}

We also remind the definition:
\begin{align*}\tag{\ref{Mhat}}
{\bf \hat M} := {\bf I}_m + {\bf M}.
\end{align*}

\begin{proof}[Proof of Theorem \ref{Th3}]
Below, we discuss the conditions under which perturbations of ${\bf F}$ are stable. We work to linear order in $T^{-1}$ as stated in Theorem \ref{Th3}. We treat separately the evolution of $\delta {\bf A}$, $\delta {\bf S}$ and $\delta {\bf B}$ under a general perturbation $\delta {\bf F}$ . 

\begin{enumerate}[leftmargin=*,label=\arabic*.]
\item{Stability of $\delta {\bf B}$} 

\begin{enumerate}[leftmargin=*,label*=\arabic*]
\item Evolution of $\delta {\bf B}$ is given by:
\begin{align}\label{BEvol}
 \left<\Delta \delta{B}_{ij}\right> = \frac 1T \sum_k \left( \frac{\hat M^{-1}_{ik}}{\left<y^2_k\right>} v^{j+m} -\delta_{ik}\right) \delta B_{kj}.
\end{align}
\begin{proof} Starting from \eqref{dF} and using \eqref{G}: 
\begin{align*}
 \left<\Delta \delta{B}_{ij}\right> &= \sum_{k}\left<\Delta \delta{ F}_{ik}\right> G^\top_{kj} \\
&= \frac{1}{T}\sum_k \frac{\hat M^{-1}_{ik}}{ \left<y^2_k\right>}  \sum_{lp} \delta F_{kl}C_{lp}G_{jp}- \frac{1}T \delta B_{ij}. 
\end{align*}
Here the last line results from equation \eqref{G} applied to \eqref{ddf}. Let's look at the first term again using \eqref{G} and then \eqref{dF},
\begin{align*}
 \frac{1}{T}\sum_k \frac{\hat M^{-1}_{ik}}{ \left<y^2_k\right>}  \sum_{lp} \delta F_{kl}C_{lp}G_{jp}&= \frac 1T  \sum_k \frac{\hat M^{-1}_{ik}}{\left<y^2_k\right>} \sum_l\delta F_{kl} v^{j+m} G_{jl}\nonumber \\
 &= \frac 1T  \sum_k \frac{\hat M^{-1}_{ik}}{\left<y^2_k\right>} v^{j+m} \delta B_{kj}.
\end{align*}
Combining these give \eqref{BEvol}.\end{proof}
 
\item When is \eqref{BEvol} stable? Next, we show that stability requires 
\begin{align*}
\left\lbrace v^{1}, \ldots,v^{m}\right\rbrace > \left\lbrace v^{m+1}, \ldots,v^{n}\right\rbrace.
\end{align*}

For ease of manipulation, we express \eqref{BEvol} as a matrix equation for each column of $\delta{\bf B}$. For convenience we change our notation to $\delta B_{kj}=\delta B_{k}^j$
\begin{align*}
 &\left<\Delta \delta{B}_{i}^j\right> = \sum_k P^j_{ik}\delta B_{k}^j\\
  &\text{where}\quad P^j_{ik}\equiv \frac{1}{T}\left(O_{ik}v^{j+m} -\delta_{ik}\right), \quad  \text{and} \quad O_{ik} \equiv \frac{\hat M^{-1}_{ik}}{\left<y^2_k\right>}. 
\end{align*}
We have one matrix equation for each $j$. These equations are stable if all eigenvalues of all ${\bf P^j}$ are negative. 
\begin{align*}
\lbrace \text{eig}({\bf P})\rbrace < 0 &\quad\implies \quad \lbrace \text{eig}({\bf O})\rbrace  < \frac{1}{v_j}, \quad j = m+1,\ldots,n.\\
&\quad\implies \quad \lbrace \text{eig}({\bf O}^{-1})\rbrace  > v_j, \quad j = m+1,\ldots,n.
\end{align*}

\item If one could calculate eigenvalues of ${\bf O}^{-1}$, the stability condition can be articulated. We start this calculation by noting that 
\begin{align}\label{Qinv1}
\sum_k O _{ik}\left<y_ky_j\right> &= \sum_{k}\hat M^{-1}_{ik}\frac{\left<y_ky_j\right>}{\left<y^2_k\right>} \nonumber \\
&= \sum_{k}\hat M^{-1}_{ik}\hat M_{kj} = \delta_{ij} & \text{(from \eqref{main2})}.
\end{align}
Therefore,
\begin{align}\label{Oeig}
{\bf O}^{-1} = \left<{\bf y} {\bf y}^\top\right> = {\bf F}{\bf C}{\bf F}^\top.
\end{align}
Then, we need to calculate the eigenvalues of ${\bf F}{\bf C}{\bf F}^\top$. They are:
\begin{align*}
 \text{eig}({\bf O}^{-1}) = \left\lbrace v^{1}, \ldots,v^{m}\right\rbrace.
 \end{align*}
\begin{proof} We start with the eigenvalue equation.
\begin{align*}
{\bf F}{\bf C}{\bf F}^\top{\boldsymbol \lambda} &= \lambda {\boldsymbol \lambda} 
\end{align*}
Multiply both sides by ${\bf F}^\top$:
\begin{align*}
 {\bf F}^\top{\bf F} {\bf C}{\bf F}^\top {\boldsymbol \lambda} = \lambda \left({\bf F}^\top{\boldsymbol \lambda}\right).
\end{align*}
Next, we use the commutation of  ${\bf F}^\top{\bf F}$ and ${\bf C}$, \eqref{FC}, and the orthogonality of neural filters, ${\bf F}{\bf F}^\top={\bf I}_m$, \eqref{orthF} to simplify the left hand side:
\begin{align*}
 {\bf F}^\top{\bf F} {\bf C}{\bf F}^\top {\boldsymbol \lambda} &=  {\bf C}{\bf F}^\top{\bf F} {\bf F}^\top {\boldsymbol \lambda} = {\bf C}\left({\bf F}^\top {\boldsymbol \lambda}\right).
\end{align*}
This implies that
\begin{align}\label{eq1}
 {\bf C}\left({\bf F}^\top {\boldsymbol \lambda}\right)=\lambda \left({\bf F}^\top{\boldsymbol \lambda}\right).
 \end{align}
Note that by orthogonality of neural filters, the following is also true:
\begin{align}\label{eq2}
 {\bf F}^\top{\bf F} \left({\bf F}^\top{\boldsymbol \lambda}\right)= \left({\bf F}^\top{\boldsymbol \lambda}\right).
 \end{align}
All the relations above would hold true if $\lambda = 0$ and $\left({\bf F}^\top{\boldsymbol \lambda}\right)=0$, but this would require ${\bf F}\left({\bf F}^\top{\boldsymbol \lambda}\right)={\boldsymbol \lambda}=0$, which is a contradiction. Then, \eqref{eq1} and \eqref{eq2} imply that $\left({\bf F}^\top{\boldsymbol \lambda}\right)$ is a shared eigenvector between ${\bf C}$ and ${\bf F}^\top{\bf F} $. ${\bf F}^\top{\bf F}$ and ${\bf C}$ was shown to commute before and they share a complete set of eigenvectors. However, some $n-m$ eigenvectors of  ${\bf C}$  have zero eigenvalues in ${\bf F}^\top{\bf F}$. We had labeled shared eigenvectors with unit eigenvalue in ${\bf F}^\top{\bf F}$ to be ${\bf v}^1,\ldots,{\bf v}^m$.  The eigenvalue of $\left({\bf F}^\top{\boldsymbol \lambda}\right)$  with respect to ${\bf F}^\top{\bf F} $ is  1, therefore ${\bf F}^\top{\boldsymbol \lambda} $ is one of ${\bf v}_1,\ldots,{\bf v}_m$. This proves that $\lambda = \left\lbrace v^{1}, \ldots,v^{m}\right\rbrace$ and 
\begin{align*}
 \text{eig}({\bf O}^{-1}) = \left\lbrace v^{1}, \ldots,v^{m}\right\rbrace.
 \end{align*}
\end{proof}
\item From \eqref{Oeig}, it follows that for stability 
\begin{align*}
\left\lbrace v^{1}, \ldots,v^{m}\right\rbrace > \left\lbrace v^{m+1}, \ldots,v^{n}\right\rbrace
\end{align*}
\end{enumerate}

\item{Stability of $\delta {\bf A}$ and $\delta {\bf S}$}

Next, we check stabilities of $\delta {\bf A}$ and $\delta {\bf S}$.
\begin{align}
 \left<\Delta \delta{A}_{ij}\right>+\left<\Delta \delta{S}_{ij}\right> &= \sum_{k}\left<\Delta \delta{ F}_{ik}\right> F^T_{kj} \qquad\qquad \text{(from definition \eqref{dF})} \nonumber \\
&=  -\frac{1}{T}\sum_k \frac{\hat M^{-1}_{ik}}{\left<y^2_k\right>} \sum_{lm} F_{kl}\delta F_{jm}C_{lm}-\frac 1T \left(\delta{A}_{ij}+\delta{S}_{ij}\right)\nonumber \\
&=  -\frac{1}{T}\sum_k \frac{\hat M^{-1}_{ik}}{ \left<y^2_k\right>} \sum_{l} \left({\bf F}{\bf C}{\bf F}^T\right)_{kl}\left(\delta{A}^T_{lj}+\delta{S}^T_{lj}\right) -\left(\delta{A}_{ij}+\delta{S}_{ij}\right) \label{dads}.
\end{align}
In deriving the last line, we used equations \eqref{dF} and \eqref{G}. The $k$ summation was calculated before \eqref{Qinv1}. Plugging this in \eqref{dads}, one gets
\begin{align*}
 \left<\Delta \delta{A}_{ij}\right>+\left<\Delta \delta{S}_{ij}\right> &= -\frac{1}{T}\left(\delta{A}_{ij}+\delta{A}^T_{ij}+\delta{S}_{ij}+\delta{S}_{ij}\right) = \frac{-2}{T}\delta{S}_{ij} \\
 \implies  \left<\Delta \delta{A}_{ij}\right> &= 0  \qquad\qquad \text{(from skew symmetry of {\bf A})}\\
\implies  \left<\Delta \delta{S}_{ij}\right> &= \frac{-2}{T}\delta{S}_{ij} .
 \end{align*}
$\delta {\bf A}$ perturbation, which rotates neural filters to other orthonormal basis within the principal subspace, does not decay. On the other hand, $\delta {\bf S}$ destroys orthonormality and these perturbations do decay, making the orthonormal solution stable. 
\end{enumerate}
Collectively, the results above prove Theorem \ref{Th3}.
\end{proof}

 \subsection{Perturbation of the stationary state due to data presentation}\label{A6}
 
 Our discussion of the linear stability of the stationary point assumed general perturbations. Perturbations that arise from data presentation, 
\begin{align}\label{eq3}
\delta {\bf F} = \Delta {\bf F},
\end{align}
form a restricted class of the most general case, and have special consequences. Focusing on this case, we show that data presentations do not rotate the basis for extracted subspace in the stationary state.

We calculate perturbations within the extracted subspace. Using \eqref{dF} and \eqref{G}
\begin{align}\label{dff}
 \delta{\bf A} + \delta {\bf S} &=  \delta {\bf F} \,{\bf F}^\top \nonumber \\
 &= \Delta {\bf F} \,{\bf F}^\top & \text{from \eqref{eq3}} \nonumber \\
&= \hat{\bf M}^{-1}\left(\Delta {\bf W} - \Delta \hat {\bf M}\, {\bf F}\right){\bf F}^\top & \text{expand \eqref{QF} to first order in $\Delta$} \nonumber \\
&=\hat{\bf M}^{-1}\left(\Delta {\bf W} \, {\bf F}^\top - \Delta \hat {\bf M}\right) & \text{from \eqref{orthF}} .
\end{align}
Let's look at $\Delta {\bf W} \, {\bf F}^\top$ term more closely:
\begin{align*}
\left(\Delta {\bf W} \, {\bf F}^\top \right)_{ij} &= \sum_k \eta_i\left(y_i x_k - y_i^2 W_{ik}\right)F^\top_{kj} \nonumber \\
&=\eta_i\left(y_i \sum_k F_{jk} x_k - y_i^2 \sum_kW_{ik}F^\top_{kj} \right) \nonumber \\
& =\eta_i\left(y_i y_k - y_i^2 \hat M_{ij} \right) \nonumber \\ 
&= \Delta \hat{M}_{ij}.
\end{align*}
Plugging this back into \eqref{dff} gives, 
\begin{align}
\delta{\bf A} + \delta {\bf S} = 0, \qquad \implies \qquad \delta{\bf A} =0, \quad \& \quad \delta {\bf S} = 0,
\end{align}
Therefore, perturbations that arise from data presentation do not rotate neural filter basis within the extracted subspace. This property should increase the stability of the neural filter basis within the extracted subspace.


{
\bibliographystyle{apa}

}
\end{document}